\let\vec\mathbf                   
\newtheorem{lemma}{Lemma}         
\newtheorem{corollary}[lemma]{Corollary}  
\pgfplotsset{compat=1.16}         
\definecolor{col1}{RGB}{64,92,137} 
\definecolor{col2}{RGB}{49,140,55} 
\begin{document}
\title{Tailoring Term Truncations for Electronic Structure Calculations Using a Linear Combination of Unitaries}
\author{Richard Meister}
\affiliation{Department of Materials, University of Oxford, Oxford OX1 3PH, United Kingdom}
\author{Simon C. Benjamin}
\affiliation{Department of Materials, University of Oxford, Oxford OX1 3PH, United Kingdom}
\author{Earl T. Campbell}
\affiliation{Department of Physics and Astronomy, University of Sheffield, Sheffield S3 7RH, United Kingdom}
\affiliation{AWS Center for Quantum Computing, Pasadena, CA 91125, USA}

\begin{abstract}
\noindent A highly anticipated use of quantum computers is the simulation of complex quantum systems including molecules and other many-body systems. One promising method involves directly applying a linear combination of unitaries (LCU) to approximate a Taylor series by truncating after some order. Here we present an adaptation of that method, optimized for Hamiltonians with terms of widely varying magnitude, as is commonly the case in electronic structure calculations. We show that it is more efficient to apply LCU using a truncation that retains larger magnitude terms as determined by an iterative procedure. We obtain bounds on the simulation error for this generalized truncated Taylor method, and for a range of molecular simulations, we report these bounds as well as exact numerical results. We find that our adaptive method can typically improve the simulation accuracy by an order of magnitude, for a given circuit depth.
\end{abstract}

\maketitle

\section{Introduction}
One of the most promising applications of quantum computers is the efficient simulation of quantum systems~\cite{feynman1982simulating}, including those that arise in quantum chemistry. Following the first concepts for such simulations~\cite{lloyd1996universal,abrams1997sim}, there have been numerous proposed algorithms to simulate these systems using quantum computers~\cite{ortiz2001quantum,aspuru2005simulated,berry2006efficient,wang2008algorithm,whitfield2010simulation,campbell2019random,childs2019nearly,childs2019faster,ouyang2020compilation}, often with variations of Trotter-Suzuki product formulas~\cite{trotter1959product,suzuki1976product}. These methods usually approximate the time evolution operator by sequentially evolving the terms in the Hamiltonian individually. Through extensive study, the required gate count was reduced substantially over time~\cite{wecker2014gatecount,babbush2015chemical,hastings2014improving,poulin2014trotter,mcclean2014locality,childs2019trotter}. However, the scaling of the inverse simulation error of such product formulas is polynomial in the circuit gate count.

An alternative is available through the technique of linear combinations of unitaries (LCU). Here, in contrast to the product formula approaches, one derives a quantum circuit that directly applies a sum of unitaries, allowing for a much greater variety of accessible operators. A key enhancement was the replacement of a probabilistic step in the original scheme~\cite{childs2012lcu} with a near-deterministic process based on oblivious amplitude amplification~\cite{berry2014improvement}.

The LCU method gave rise to a number of implementations for Hamiltonian simulation. The approach in Ref.~\cite{low2019wellconditioned} uses linear combinations of product formulas, taking advantage of commuting terms in the Hamiltonian -- like pure product formulas -- while improving the complexity scaling with inverse error to be only poly-logarithmic using LCU. In Ref.~\cite{berry2015hamiltonian} it is applied to enhance the scaling with the error in Szegedy quantum walks while retaining their advantage for sparse Hamiltonians. Extensions of this approach are quantum signal processing~\cite{low2017optimal,gilyen2019quantum} and qubitization~\cite{low2019qubitization}, of which variants specifically for quantum chemistry exist~\cite{berry2019qubitization}.

One of the most direct uses of LCU is presented in~\cite{berry2015taylor}, where the time evolution operator is approximated by truncating its Taylor expansion at some appropriate order. This results in exponentially better scaling of the complexity with inverse error than for product formulas.

The aforementioned methods of qubitization and quantum signal processing have been shown to exhibit even better scaling for many types of Hamiltonians~\cite{childs2018speedup,babbush2018encoding,low2019qubitization,berry2019qubitization}. However, there are instances where they are less suited, one prominent example being for simulating time-dependent Hamiltonians. Even for intrinsically time-independent cases, introducing a time dependence by transforming to a rotating frame can be beneficial if the Hamiltonian is diagonally dominant. In contrast to qubitization and quantum signal processing, the approach of the truncated Taylor series in~\cite{berry2015taylor} can be applied to such time-dependent cases with reasonable overhead, as shown in Refs.~\cite{low2018hamiltonian,babbush2019chemistry}, making it very relevant for such instances.

In this work, we present a variant of the truncated Taylor scheme~\cite{berry2015taylor} that includes terms by weight rather than order. By doing so, we try to exploit the fact that the Hamiltonians of some quantum mechanical systems~\nobreakdash--~especially those of electrons in molecules~\nobreakdash--~have terms whose magnitudes vary considerably. This suggests that some (large) terms should be included to higher orders than other (small) terms. Our variant implements just that, while respecting the efficient circuit implementation of~\cite{berry2015taylor} and subsequent improvements to \textsc{select} and \textsc{prepare} subroutines~\cite{babbush2018encoding,low2018trading}.

Our algorithm starts from an empty expansion and iteratively adds terms that facilitate the largest decline of the error bound for one additional gate. This greedy method leads to a more rapid reduction of the error in the very early stage of the construction when applied to electronic structure Hamiltonians. At a later stage, the rate of convergence becomes roughly equal to the original method, maintaining an approximately constant factor advantage in the error for the investigated molecules. Therefore, the asymptotic behavior is equivalent for both methods, but we accomplish a constant improvement. We find that the error of our modified scheme is typically one order of magnitude lower than in the original method at the same gate cost. For a fixed error magnitude, this results in reducing the circuit depth by roughly one full order of the expansion.

The rest of the paper is structured as follows. \Cref{sec:trunc_taylor} contains a detailed description of our modified method adapted from~\cite{berry2015taylor}. In \cref{sec:results}, we present results for error bounds as well as numerically evaluated errors for a variety of electronic structure Hamiltonians. Lastly, \cref{sec:conclusion} concludes the paper and gives an outlook to possible further work.

\section{Truncated Taylor series} \label{sec:trunc_taylor}
Our method is closely related to the approach presented by~\citet{berry2015taylor}. We will give a detailed description of our modified method, which at the same time serves as a summary of~\cite{berry2015taylor}.
\subsection{Linear combination of unitaries}
The protocol is based on a method of adding unitaries with the help of ancilla qubits~\cite{childs2012lcu}. We start from a Hamiltonian of the form
\begin{equation} \label{eq:H}
    H = \sum_{\ell=0}^{L-1}\alpha_\ell h_\ell,
\end{equation}
where $\alpha_\ell$ are real positive scalars\footnote{Phases can always be pushed into the operators $h_\ell$.} and $h_\ell$ are unitaries for which implementations on a quantum computer exist. Without loss of generality, we assume the terms are sorted by magnitude, i.e. $\alpha_{\ell + 1} \geq \alpha_\ell$. The approach also used in~\cite{berry2015taylor} is to implement an approximation to the corresponding time evolution operator
\begin{equation}
    U(t) = e^{-iHt}
\end{equation}
with a Taylor series. Taking $t$ to be sufficiently small, the series representation of $U(t)$ can be approximated by the sum
\begin{equation} \label{eq:U_L}
    U_{\vec{L}}(t) \coloneqq \mathds{1} + \sum_{k=1}^{\infty} \frac{(-it)^k}{k!}\, \prod_{j = 1}^{k}\left(\sum_{\ell_j = 0}^{L_j - 1} \alpha_{\ell_j} h_{\ell_j}\right)
\end{equation}
where $\vec{L}$ is a vector of $L_k$ with $0 \leq L_k \leq L$ and $k\in\mathds{N}^+$, meaning the individual sums in the product only contain the $L_k$ largest terms of $H$. This is the main difference to~\cite{berry2015taylor}, where the series is truncated at some appropriate order $n$, which yields
\begin{equation} \label{eq:U_n}
    U_n(t) \coloneqq \mathds{1} + \sum_{k=1}^{n} \frac{(-it)^k}{k!}\, \prod_{j = 1}^{k}\left(\sum_{\ell_j = 0}^{L - 1} \alpha_{\ell_j} h_{\ell_j}\right).
\end{equation}
\Cref{eq:U_n} is a special case of \cref{eq:U_L}, where all orders up to $n$ are added in full\footnote{For all quantities with an $\vec{L}$ subscript we will alternatively replace it with $n$ to mean an $\vec{L}$ where $L_k = L$ for $k \leq n$ and $L_k = 0$ for $k > n$.}. Our modified version of the sum includes some orders only partially, giving greater control over the total gate count and allowing for quicker convergence of the error bounds.

The magnitude of the time step $t$ will be a fixed value restricted by the method. Longer times $\tau = rt$, can be simulated by applying $U_{\vec{L}}^r$. However, most of this paper will focus on the implementation of a single time step.

To keep the notation simple, the products of the coefficients $\alpha_\ell$ with $t^k/k!$ are gathered into new variables $\beta_j$, and all products of the unitaries $h_\ell$ together with $(-i)^k$ are collected into operators $V_j$, with a newly introduced label $j$ numbering all terms in the sum. Note that even if different products of $h_\ell$ yield identical operators, they are treated as separate $V_j$, each with a corresponding weight $\beta_j$.
By construction, all $\beta_j$ are also real and positive. Thus, \cref{eq:U_L} becomes
\begin{equation}
    U_{\vec{L}} = \sum_{j=0}^{m-1} \beta_j V_j
\end{equation}
where the time-dependence of $U_{\vec{L}}$ and $\beta_j$ is not explicitly denoted, and the total number of terms $m$ implicitly depends on $\vec{L}$.

In order to apply $U_{\vec{L}}$ to a state $\ket{\psi}$, we define the unitary operators $\mathcal{P}(t)$ and $\mathcal{S}$ (\textsc{prepare} and \textsc{select}) in accordance with~\cite{berry2015taylor}. The \textsc{prepare} operator $\mathcal{P}$, whose time dependence we will make implicit from here on, maps the $\ket{0}$ state of a set of ancilla qubits to the weighted superposition
\begin{equation} \label{eq:prepare}
    \mathcal{P}  \ket{0} \coloneqq \frac{1}{\sqrt{s_{\vec{L}}}} \sum_{j=0}^{m-1}\sqrt{\beta_j}\ket{j}
\end{equation}
with the implicitly $t$-dependent normalization constant
\begin{equation}
    s_{\vec{L}} \coloneqq \sum_{j=0}^{m-1} \beta_j.
\end{equation}

The \textsc{select} operator $\mathcal{S}$ acts on a state $\ket{\psi}$ with the operator $V_j$, where $j$ is the state of the ancilla introduced above. So its action on a tensor state of $\ket{\psi}$ with the ancilla $\ket{j}$ is
\begin{equation}
    \mathcal{S} \ket{j}\ket{\psi} \coloneqq \ket{j} V_j \ket{\psi}.
\end{equation}
Given these two operators $\mathcal{P}$ and $\mathcal{S}$, we proceed analogously to~\cite{berry2015taylor} by introducing a new operator
\begin{equation}
    \mathcal{W} \coloneqq (\mathcal{P}^\dagger \otimes \mathds{1}) \, \mathcal{S} \, (\mathcal{P} \otimes \mathds{1})
\end{equation}
which has the effect
\begin{equation}\label{eq:W}
    \mathcal{W} \ket{0}\ket{\psi} = \frac{1}{s_{\vec{L}} }\ket{0} U_{\vec{L}} \ket{\psi} + N \ket{0^\mathsmaller{\perp}, \Phi}
\end{equation}
where $N$ is the appropriate constant for the state to be normalized, and $\ket{0^\mathsmaller{\perp}, \Phi}$ is a garbage state whose ancilla part has no overlap with the ancillary $\ket{0}$ state.

\subsection{Oblivious amplitude amplification}
The naïve method for obtaining $U_{\vec{L}}\ket{\psi}$ would be to measure the ancilla of $\mathcal{W}\ket{0}\ket{\psi}$, see \cref{eq:W}, and post-select for the ancilla $\ket{0}$ state. However, since $s_{\vec{L}}$ increases with $t$, the success probability for large $t$ diminishes. Additionally, $t$ is always subject to convergence of \cref{eq:U_L}. Due to the postselection, dividing the total $t$ into smaller segments and repeating the process multiple times would also suppress the total success probability.

One way around this problem also used in~\cite{berry2015taylor} is the so-called oblivious amplitude amplification. As detailed in \cref{lemma:amp} in \cref{sec:derivations}, and references therein, if $U_{\vec{L}}$ were unitary and $s_\vec{L} = 2$, the amplification operator
\begin{equation} \label{eq:q}
    \mathcal{Q} \coloneqq -\mathcal{W} R \mathcal{W}^\dagger R
\end{equation}
with $R\coloneqq 2 \Pi - \mathds{1}$ the reflection operator about the $\ket{0}$ state of the ancilla and $\Pi \coloneqq \ketbra{0}{0} \otimes \mathds{1}$ the projector onto the ancilla $\ket{0}$, would have the effect~\cite{berry2014improvement}
\begin{align} \label{eq:amp}
    \mathcal{Q}\mathcal{W} \ket{0} \ket{\psi}= \ket{0} U_{\vec{L}} \ket{\psi}.
\end{align}
Thus, we define
\begin{equation}
    \mathcal{A} \coloneqq \mathcal{Q}\mathcal{W} = -\mathcal{W}R\mathcal{W}^\dagger R\mathcal{W}.
\end{equation}

We first discuss the requirement of $s_{\vec{L}} = 2$. Our form of the Taylor expansion leads to $s_{\vec{L}}$ being of the form
\begin{equation} \label{eq:s_L}
    s_{\vec{L}}(t) \coloneqq \sum_{k=1}^\infty \frac{t^k}{k!}\,\prod_{j=1}^{k}\,\smash{\underbrace{\!\left[\sum_{\ell_j=0}^{L_j-1} \alpha_{\ell_j}\right]\!}_{\coloneqq \Lambda_j}}= \sum_{k=0}^\infty \frac{t^k}{k!}\prod_{j=1}^k \Lambda_j.\vphantom{\underbrace{\sum_j^L}_{A_j}}
\end{equation}
The restriction $s_{\vec{L}} = 2$ therefore forces the simulation time $t$ to be the only real root of
\begin{equation}
    \sum_{k=0}^\infty \frac{t^k}{k!}\prod_{j=1}^k \Lambda_j - 2 = 0
\end{equation}
which we call $t_{\vec{L}}$. If we were to include all orders in full, i.e. $L_k = L, \ \forall\,k$, all $\Lambda_j$ would be equal and the infinite sum on the left would become the series of the exponential function. We call the time step for this case $t_{\infty} = \log{(2)}/\Lambda$, with the definition $\Lambda \coloneqq \sum_{j=0}^{L} \alpha_j$.

Shorter times can be accomplished by using an extra qubit, as described in~\cite{berry2014improvement}. Since the only requirement for oblivious amplitude amplification to work is $s_{\vec{L}} = 2$, and shorter times mean $s_{\vec{L}} < 2$, i.e. the amplitude of the ancilla $\ket{0}$ is too large, we can introduce an additional qubit to the ancilla and prepare it with enough weight such that the ancilla $\ket{0}$ reduces to amplitude $1/2$. These shorter times are only relevant in the last time step of a simulation and have almost the same cost as a full step, so we limit the rest of the discussion to multiples of $t_{\vec{L}}$.

\Cref{eq:amp} only strictly holds for unitary $U_{\vec{L}}$, but the series truncation means that $U_{\vec{L}}$ is only approximately unitary.  Therefore, we need the action of $\mathcal{A}$ for a general $U_{\vec{L}}$ and again follow~\cite{berry2015taylor}. Applying $\mathcal{A}$ to a state $\ket{0}\ket{\psi}$ and projecting onto the ancilla $\ket{0}$ yields
\begin{equation}
    \Pi\mathcal{A}\ket{0}\ket{\psi} = \ket{0} \left(\frac{3}{s_{\vec{L}}}U_{\vec{L}} - \frac{4}{s_{\vec{L}}^3}U_{\vec{L}}^{\vphantom{\dagger}} U_{\vec{L}}^\dagger U_{\vec{L}}^{\vphantom{\dagger}}\right) \ket{\psi},
\end{equation}
(derivation in \cref{sec:derivations}, \cref{lemma:action_A}) and we call the operator we are actually applying in the $\ket{\psi}$ subspace
\begin{equation} \label{eq:a_tilde}
    \tilde{\mathcal{A}}_{\vec{L}} \coloneqq \frac{3}{s_{\vec{L}}}U_{\vec{L}} - \frac{4}{s_{\vec{L}}^3}U_{\vec{L}}^{\vphantom{\dagger}} U_{\vec{L}}^\dagger U_{\vec{L}}^{\vphantom{\dagger}}.
\end{equation}

\subsection{Gate construction}
We also want to elaborate on the specific gate construction to implement $\mathcal{A}$ efficiently, adapted from~\cite{berry2015taylor}. First, the ancilla is divided into $\kappa + 1$ registers, where $\kappa \coloneqq ||\vec{L}||_0$ is the number of non-zero elements in the vector $\vec{L}$. The first register is named $q$ and contains $\kappa$ qubits, while the others are given labels $c_1 \ldots c_\kappa$, with $c_k$ containing $\lceil \log_2 L_k \rceil$ qubits.

The $q$ register's purpose is to represent different orders, while registers $c_k$ are needed for the terms in each order. This makes it convenient to use a multi-index $j \equiv (k, \ell_1, \ldots, \ell_k)$. The corresponding state of the ancilla is
\begin{equation}
    \ket{j} \equiv \ket{k}_{\!q}\ket{\ell_1}_{\!c_1}\ldots\ket{\ell_k}_{\!c_k}\ldots
\end{equation}
where we leave the state of the registers $c_{k'}$ with $k' > k$ unspecified. The coefficient associated with this index is
\begin{equation}
    \beta_j = \beta_{(k, \ell_1, \ldots, \ell_k)} = \frac{t^k}{k!}\,\alpha_{\ell_1}\ldots\alpha_{\ell_k}.
\end{equation}

\paragraph{\textsc{prepare}}
For this operator, we slightly deviate from~\cite{berry2015taylor}. Exact implementation of $\mathcal{P}$ as defined in \cref{eq:prepare} would necessitate the preparation of the $c_k$ registers to be conditioned on qubits in the $q$ register. We can, however, implement an operator $\mathcal{P}^\star$, which acts equivalently to $\mathcal{P}$ when used in $\mathcal{W}$, but is performed independently on each of the $c_k$ registers without controls on the qubits in $q$.

The $q$ register will contain the prefactor for each order $k$ and uses unary coding, i.e. $\smash{\ket{k}_{\!q} \coloneqq \ket{1^k 0^{k-\kappa}}_{\!q}}$. Thus, the \textsc{prepare} operator $\mathcal{P}^\star(t)$ acts on this register proportional to

\begin{equation}
    \ket{0^\kappa}_{\!q} \mapsto \sum_{k=0}^\kappa \sqrt{\frac{t^k}{k!}\prod_{j=1}^k \Lambda_j} \ket{k}_{\!q}.
\end{equation}
This can be implemented by a rotation on the first qubit, and rotations controlled by the previous one on each subsequent qubit.

The $c_k$ registers can now all be almost identically prepared to contain the coefficients of the Hamiltonian, where each index $\ell$ is mapped to the qubits of $c_k$ in regular binary coding. So the action of $\mathcal{P}^\star$ on a single register $c_k$ is proportional to
\begin{equation}
    \ket{0}_{\!c_k} \mapsto \sum_{\ell=0}^{L_k-1} \sqrt{\alpha_\ell} \ket{\ell}_{\!c_k}.
\end{equation}
For this, any efficient method for arbitrary state preparation can be used, whose cost we discuss presently.

Combining these constituents into a single unitary $\mathcal{P}^\star$ and applying it to the whole ancilla yields the desired operator equivalent to \cref{eq:prepare} if used in $\mathcal{W}$, which is shown in more detail in \cref{lemma:p_star} in \cref{sec:derivations}.

\paragraph{\scshape select}
Using the established structure of the ancilla, the $\mathcal{S}$ operator must have the action
\begin{multline}
    \mathcal{S} \ket{k}_{\!q} \ket{\ell_1}_{\!c_1}\ldots \ket{\ell_k}_{\!c_k} \ldots \ket{\ell_\kappa}_{\!c_\kappa} \ket{\psi} \\ = \ket{k}_{\!q} \ket{\ell_1}_{\!c_1}\ldots \ket{\ell_k}_{\!c_k} \ldots \ket{\ell_\kappa}_{\!c_\kappa} \tilde{h}_{\ell_1} \ldots \tilde{h}_{\ell_k} \ket{\psi}
\end{multline}
with $\tilde{h}_{\ell} \coloneqq -ih_\ell$. This can be accomplished by having a sequence of groups of unitaries in the circuit\footnote{The groups in the circuit are numbered right-to-left to match the established numbering convention of the operators.}. Each of the groups $m = 1\ldots \kappa$ contains the unitaries $\tilde{h}_{\ell_m}$, with $\ell_m = 0\ldots L_m - 1$, acting on the target state $\ket{\psi}$.

The register $c_m$ is used as the addressing register for group $m$, i.e. the state $\ket{\ell_m}_{\!c_m}$ determines which unitary in group $m$ is applied. To achieve this, we use the fact that the $c$ registers are in binary coding, so $\ell_m$ is represented as a binary number with the $\lceil\log_2 L_m \rceil$ qubits in $c_m$ as digits. By controlling $\smash{\tilde{h}_{\ell_m}}$ on the $c_m$ register in a way that matches the binary representation of $\ell_m$, only the unitary with the correct index is applied. For example, $\tilde{h}_5$ would be controlled by the last and antepenultimate qubit in $c_m$ and anti-controlled by all other qubits in $c_m$ (since 5 corresponds to the state $\ket{0\ldots 0101}$ in binary coding).

Additionally, the $q$ register specifies how many of the groups are applied. If $q$ is in the state $\ket{k}_{\!q}$, only the first $k$ groups should be active. The unary coding in $q$ makes this straightforward to implement by additionally controlling every unitary in group $m$ with the $m\textsuperscript{th}$ qubit in $q$. \Cref{fig:select} shows a sketch of the full construction.

\begin{figure}[tbh]
    \centering
\providecommand{\ket}[1]{\left|#1\right\rangle}
\begin{tikzpicture}[scale=0.850000,x=1pt,y=1pt]
\draw[use as bounding box,draw=none] (-20.000000, -30.0000) rectangle (225.000000, 155.250000);
\draw[color=black] (0.000000,148.500000) -- (225.000000,148.500000);
\draw[color=black] (0.000000,135.000000) -- (225.000000,135.000000);
\filldraw[color=white,fill=white] (0.000000,118.125000) rectangle (-4.000000,151.875000);
\draw[decorate,decoration={brace,amplitude = 4.000000pt},very thick] (0.000000,118.125000) -- (0.000000,151.875000);
\draw[color=black] (-4.000000,135.000000) node[left] {$q$};
\draw[color=black] (0.000000,94.500000) -- (225.000000,94.500000);
\draw[color=black] (0.000000,81.000000) -- (225.000000,81.000000);
\filldraw[color=white,fill=white] (0.000000,77.625000) rectangle (-4.000000,111.375000);
\draw[decorate,decoration={brace,amplitude = 4.000000pt},very thick] (0.000000,77.625000) -- (0.000000,111.375000);
\draw[color=black] (-4.000000,94.500000) node[left] {$c_1$};
\draw[color=black] (0.000000,54.000000) -- (225.000000,54.000000);
\draw[color=black] (0.000000,40.500000) -- (225.000000,40.500000);
\filldraw[color=white,fill=white] (0.000000,37.125000) rectangle (-4.000000,70.875000);
\draw[decorate,decoration={brace,amplitude = 4.000000pt},very thick] (0.000000,37.125000) -- (0.000000,70.875000);
\draw[color=black] (-4.000000,54.000000) node[left] {$c_2$};
\draw[color=black] (0.000000,27.000000) node[left] {$\substack{\vspace{.4em}\\ \vdots}$};
\draw[color=black] (0.000000,0.000000) -- (225.000000,0.000000);
\draw[color=black] (0.000000,0.000000) node[left] {$\ket{\psi}$};
\begin{scope}
\draw (12.000000, 121.500000) node {$\substack{\vdots\\ \vspace{.7em}}$};
\end{scope}
\begin{scope}
\draw (12.000000, 108.000000) node {$\substack{\vdots\\ \vspace{-.7em}}$};
\end{scope}
\begin{scope}
\draw (12.000000, 67.500000) node {$\substack{\vdots\\ \vspace{-.7em}}$};
\end{scope}
\draw (8.000000, -6.000000) -- (16.000000, 6.000000);
\draw[color=black] (37.500000, 148.500000) node [fill=white] {$\hdots$};
\draw[color=black] (37.500000, 135.000000) node [fill=white] {$\hdots$};
\draw[color=black] (37.500000, 94.500000) node [fill=white] {$\hdots$};
\draw[color=black] (37.500000, 81.000000) node [fill=white] {$\hdots$};
\draw[color=black] (37.500000, 54.000000) node [fill=white] {$\hdots$};
\draw[color=black] (37.500000, 40.500000) node [fill=white] {$\hdots$};
\draw[color=black] (37.500000, 0.000000) node [fill=white] {$\hdots$};
\draw[color=black] (64.500000, 148.500000) node [fill=white] {$\hdots$};
\draw[color=black] (64.500000, 135.000000) node [fill=white] {$\hdots$};
\draw[color=black] (64.500000, 94.500000) node [fill=white] {$\hdots$};
\draw[color=black] (64.500000, 81.000000) node [fill=white] {$\hdots$};
\draw[color=black] (64.500000, 54.000000) node [fill=white] {$\hdots$};
\draw[color=black] (64.500000, 40.500000) node [fill=white] {$\hdots$};
\draw[color=black] (64.500000, 0.000000) node [fill=white] {$\hdots$};
\draw (93.000000,135.000000) -- (93.000000,0.000000);
\begin{scope}
\draw[fill=white] (93.000000, -0.000000) +(-45.000000:12.727922pt and 12.727922pt) -- +(45.000000:12.727922pt and 12.727922pt) -- +(135.000000:12.727922pt and 12.727922pt) -- +(225.000000:12.727922pt and 12.727922pt) -- cycle;
\clip (93.000000, -0.000000) +(-45.000000:12.727922pt and 12.727922pt) -- +(45.000000:12.727922pt and 12.727922pt) -- +(135.000000:12.727922pt and 12.727922pt) -- +(225.000000:12.727922pt and 12.727922pt) -- cycle;
\draw (93.000000, -0.000000) node {$\tilde{h}_1$};
\end{scope}
\filldraw (93.000000, 135.000000) circle(2.250000pt);
\draw[fill=white] (93.000000, 54.000000) circle(2.250000pt);
\filldraw (93.000000, 40.500000) circle(2.250000pt);
\draw (123.000000,135.000000) -- (123.000000,0.000000);
\begin{scope}
\draw[fill=white] (123.000000, -0.000000) +(-45.000000:12.727922pt and 12.727922pt) -- +(45.000000:12.727922pt and 12.727922pt) -- +(135.000000:12.727922pt and 12.727922pt) -- +(225.000000:12.727922pt and 12.727922pt) -- cycle;
\clip (123.000000, -0.000000) +(-45.000000:12.727922pt and 12.727922pt) -- +(45.000000:12.727922pt and 12.727922pt) -- +(135.000000:12.727922pt and 12.727922pt) -- +(225.000000:12.727922pt and 12.727922pt) -- cycle;
\draw (123.000000, -0.000000) node {$\tilde{h}_0$};
\end{scope}
\filldraw (123.000000, 135.000000) circle(2.250000pt);
\draw[fill=white] (123.000000, 54.000000) circle(2.250000pt);
\draw[fill=white] (123.000000, 40.500000) circle(2.250000pt);
\draw[color=black] (151.500000, 148.500000) node [fill=white] {$\hdots$};
\draw[color=black] (151.500000, 135.000000) node [fill=white] {$\hdots$};
\draw[color=black] (151.500000, 94.500000) node [fill=white] {$\hdots$};
\draw[color=black] (151.500000, 81.000000) node [fill=white] {$\hdots$};
\draw[color=black] (151.500000, 54.000000) node [fill=white] {$\hdots$};
\draw[color=black] (151.500000, 40.500000) node [fill=white] {$\hdots$};
\draw[color=black] (151.500000, 0.000000) node [fill=white] {$\hdots$};
\draw (180.000000,148.500000) -- (180.000000,0.000000);
\begin{scope}
\draw[fill=white] (180.000000, -0.000000) +(-45.000000:12.727922pt and 12.727922pt) -- +(45.000000:12.727922pt and 12.727922pt) -- +(135.000000:12.727922pt and 12.727922pt) -- +(225.000000:12.727922pt and 12.727922pt) -- cycle;
\clip (180.000000, -0.000000) +(-45.000000:12.727922pt and 12.727922pt) -- +(45.000000:12.727922pt and 12.727922pt) -- +(135.000000:12.727922pt and 12.727922pt) -- +(225.000000:12.727922pt and 12.727922pt) -- cycle;
\draw (180.000000, -0.000000) node {$\tilde{h}_1$};
\end{scope}
\filldraw (180.000000, 148.500000) circle(2.250000pt);
\draw[fill=white] (180.000000, 94.500000) circle(2.250000pt);
\filldraw (180.000000, 81.000000) circle(2.250000pt);
\draw (210.000000,148.500000) -- (210.000000,0.000000);
\begin{scope}
\draw[fill=white] (210.000000, -0.000000) +(-45.000000:12.727922pt and 12.727922pt) -- +(45.000000:12.727922pt and 12.727922pt) -- +(135.000000:12.727922pt and 12.727922pt) -- +(225.000000:12.727922pt and 12.727922pt) -- cycle;
\clip (210.000000, -0.000000) +(-45.000000:12.727922pt and 12.727922pt) -- +(45.000000:12.727922pt and 12.727922pt) -- +(135.000000:12.727922pt and 12.727922pt) -- +(225.000000:12.727922pt and 12.727922pt) -- cycle;
\draw (210.000000, -0.000000) node {$\tilde{h}_0$};
\end{scope}
\filldraw (210.000000, 148.500000) circle(2.250000pt);
\draw[fill=white] (210.000000, 94.500000) circle(2.250000pt);
\draw[fill=white] (210.000000, 81.000000) circle(2.250000pt);
\draw[decorate,decoration={brace,mirror,amplitude = 4.000000pt},very thick] (54.000000,-11.750000) -- (135.000000,-11.750000);
\draw (94.500000, -15.750000) node[text width=144pt,below,text centered] {\footnotesize second group};
\draw[decorate,decoration={brace,mirror,amplitude = 4.000000pt},very thick] (141.000000,-11.750000) -- (222.000000,-11.750000);
\draw (181.500000, -15.750000) node[text width=144pt,below,text centered] {\footnotesize first group};
\end{tikzpicture}
    \vspace{2ex}
    \caption{Sketch of the gate construction for $\mathcal{S}$. By taking advantage of the unary iteration structure, the $T$-count of the multi-controls can be significantly reduced~\cite{babbush2018encoding}. However, we include this non-optimized diagram for pedagogical purposes.}
    \label{fig:select}
    \vspace{-1ex}
\end{figure}

\paragraph{Gate cost}
Lastly, we want to estimate the gate complexity of the operator $\mathcal{A}$. Its constituents are two reflections $R$, and three instances of $\mathcal{W}$, each of which contains one $\mathcal{S}$ and two $\mathcal{P}^\star$. For calculations using full orders as in~\cite{berry2015taylor}, our analysis translates exactly to the gate construction given there.  We consider the universal set of Clifford + $T$ and count the number of expensive $T$-gates~\cite{bravyi2005magic,howard2014magic,campbell2017magic} for our complexity analysis.

Each reflection $R$ is a single Pauli-$Z$ operator on one of the ancilla qubits (padded between two \textsc{not} gates), anti-controlled on all others. This can be done with $\mathcal{O}(\sum_k \log_2 L_k)$ $T$-gates and a second ancilla register of size $(\kappa + \sum_k \lceil \log_2 L_k \rceil - 2)$~\cite{nielsen_chuang}.

The \textsc{prepare} stage for the $q$ register consists of $\kappa - 1$ controlled rotations with a total $T$-complexity of $\mathcal{O}(\kappa)$. Each of the $\kappa$ registers $c_k$ needs to be initialized to a specific state with $2^{\lceil \log_2 L_k \rceil} \sim L_k$ coefficients, requiring between $\mathcal{O}(\sum_k L_k)$ and $\mathcal{O}(\sum_k \sqrt{L_k}\log^2(L_k/\epsilon))$ $T$\nobreakdash-gates per register, depending on the number of additionally available ancillas, where $\epsilon$ is the accuracy of the preparation~\cite{low2018trading}. In total, this yields a $T$-count between $\mathcal{O}(\sum_k L_k)$ and $\mathcal{O}(\sum_k \sqrt{L_k} \log^2(L_k/\epsilon))$.

The fact that the controls of each $h_\ell$ in $\mathcal{S}$ form a so-called unary iteration can be exploited to lower the $T$-gate count. Each sequence of $L_k$ operators can be implemented using $\mathcal{O}(L_k)$ $T$-gates~\cite{babbush2018encoding}, plus $L_k$ times the cost of performing a single $-ih_\ell$ operator, totalling to $\sum_k L_k$ such operators. Thus, the $T$-complexity of $\mathcal{S}$ for generic Hamiltonians in the form of \cref{eq:H} is of $\mathcal{O}(\sum_k L_k)$, which we will use in this paper. Moreover, recent work~\cite{berry2019qubitization} has shown that a \textsc{select} process can be yet more efficient for the special case of $N$-orbital electronic structure problems, where the $T$-complexity is as low as $\mathcal{O}(nN)$, with $n = \max\{k : L_k \neq 0\}$.

Combining all these counts results in a total complexity of $\mathcal{O}(\sum_k L_k)$ for $\mathcal{A}$. As a proxy to use for the total gate cost in our results we thus define
\begin{equation} \label{eq:c}
    C_{\vec{L}} \coloneqq \sum_{k=1}^\infty L_k = ||\vec{L}||_1.
\end{equation}
This definition includes the cost of a full expansion to order $n$ as the special case $C_n = nL$, consistent with our previous notation. From this cost of a single time step, we discuss the complexity $C_\epsilon$ to reach some desired total simulation error $\epsilon$ in the next subsection.

\subsection{Error bounds}
We consider the error of the method per time step to be the norm of an operator $\Delta_{\vec{L}}$ which fulfills
\begin{equation}
    U(t_\infty) = \tilde{\mathcal{A}}_{\vec{L}}(t_\infty) + \Delta_{\vec{L}}(t_\infty)
\end{equation}
where we now use the step size $t_\infty = \log(2)/\Lambda$. We find that the error made by applying $\Pi\mathcal{A}$ once and tracing out the ancilla can be bounded by\footnote{$||\cdot||$ in this paper always means the operator norm.}
\begin{equation}
    \delta_{\vec{L}} \coloneqq ||\Delta_{\vec{L}}(t_\infty)|| \leq 2 - s_{\vec{L}}(t_\infty) \eqqcolon \varepsilon_{\vec{L}}
\end{equation}
up to order $\varepsilon_{\vec{L}}$ (details in \cref{sec:derivations}, \cref{lemma:delta_L}). Because using $t_{\vec{L}}$ or $t_\infty$ makes no difference in the error up to order $\varepsilon_{\vec{L}}$, we exclusively use $t_\infty$ in our calculations. The error for a total simulation time $\tau = rt_\infty = r \log(2) / \Lambda$, $r \in \mathds{N}$, is then
\begin{equation}
    ||\tilde{\mathcal{A}}_{\vec{L}}(t_\infty)^r - U(t_\infty)^r || = r\delta_{\vec{L}} = \frac{\Lambda \delta_{\vec{L}}}{\log 2}\,\tau \leq r\varepsilon_{\vec{L}},
\end{equation}
also up to order $\varepsilon_{\vec{L}}$ (see \cref{sec:derivations}, \cref{lemma:error_repetitions}).

We call the bound on the total simulation error of $r$ steps $\epsilon \coloneqq r\varepsilon$. The $T$-gate complexity $C_\epsilon$ of a simulation for time $\tau$ in terms of the total error bound $\epsilon$ is then in the range
\begin{equation}
    \mathcal{O}\left(\frac{\Lambda \tau \log\frac{\Lambda\tau}{\epsilon}}{\log\log\frac{\Lambda\tau}{\epsilon}}\right) < C_\epsilon \leq \mathcal{O}\left(\frac{L\Lambda\tau\log\frac{\Lambda\tau}{\epsilon}}{\log\log\frac{\Lambda\tau}{\epsilon}}\right),
\end{equation}
depending on the Hamiltonian. This is shown in detail in \cref{sec:derivations}, \cref{coroll:total_error_complexity}, which makes use of \cref{lemma:error_bound,lemma:error_bound_eps_L,coroll:epsilon_complexity,coroll:modified_complexity}.

\subsection{Insertion strategy}
The notion of partially included orders together with an expression for the error bound allows us to start from any given expansion $\vec{L}$ and determine which $L_k$ should be increased by 1 -- i.e. which additional gate should be included -- to give the quickest decrease of the error bound. Specifically, it is the $k$ which maximizes the expression
\begin{equation}\label{eq:insert_condition}
    \sum_{\nu \geq k}\,\frac{t^\nu}{\nu!}\,\alpha_{1+L_k}\prod_{\substack{j\neq k\\1\leq j \leq\nu}}\left(\sum_{i=1}^{L_j}\alpha_i \right).
\end{equation}
Starting from $\vec{L} = \vec{0}$, repeatedly adding terms that maximize \labelcref{eq:insert_condition} results in a greedy algorithm for decreasing the error bound, which we used to iteratively construct circuits.

\section{Results} \label{sec:results}

We first observe that for Hamiltonians with evenly distributed magnitudes $\alpha_\ell$, the only benefit of our modification is the finer control over the total gate count. By construction, whenever $C_{\vec{L}} = \nu L$, $\nu\in\mathds{N}$, our protocol and the method used in~\cite{berry2015taylor} yield identical results.

\begin{table}[b!]
    \caption{Molecules used in our calculations with their molecular formula, PubChem Compound ID (CID), number of qubits (excluding ancillas), and number of terms $L$.}
    \label{tbl:molecules}
    \vspace{1.5ex}
    \small\centering
    \setlength\tabcolsep{.8em}
    \begin{tabular}{lS[table-format=8.0]S[table-format=2.0]S[table-format=7.0]}\toprule
        {Formula} & {CID} & {Qubits} & {$L$} \\ \midrule
        HO & 157350 & 12 & 631\\
        HF & 16211014 & 12 & 631\\
        HN & 5460607 & 12 & 631\\
        LiH & 62714 & 12 & 631\\
        BH & 6397184 & 12 & 631\\
        BeH$_2$ & 139073 & 14 & 666\\
        CH$_2$ & 123164 & 14 & 1086\\
        NH$_2$ & 123329 & 14 & 1086\\
        BH$_2$ & 139760 & 14 & 1086\\
        H$_2$O & 962 & 14 & 1086\\
        BH$_3$ & 6331 & 16 & 1953\\
        CH$_3$ & 3034819 & 16 & 1969\\
        NH$_3$ & 222 & 16 & 2929\\
        CH$_4$ & 297 & 18 & 6892\\
        O$_2$ & 977 & 20 & 2239\\
        N$_2$ & 947 & 20 & 2951\\
        NO & 145068 & 20 & 4427\\
        CN & 5359238 & 20 & 5835\\
        BeO & 14775 & 20 & 5851\\
        LiF & 224478 & 20 & 5851\\
        CO & 281 & 20 & 5851\\
        BN & 66227 & 20 & 5851\\
        LiOH & 3939 & 22 & 8734\\
        HBO & 518615 & 22 & 8758\\
        HCN & 768 & 22 & 8758\\
        HOF & 123334 & 22 & 12070\\
        CHO & 123370 & 22 & 12070\\
        CHF & 186213 & 22 & 12074\\
        HNO & 945 & 22 & 12078\\
        H$_2$NO & 5460582 & 24 & 9257\\
        CH$_2$O & 712 & 24 & 9257\\
        NH$_2$F & 139987 & 24 & 15673\\
        CH$_2$F & 138041 & 24 & 15681\\
        CH$_3$F & 11638 & 26 & 18600\\
        CH$_3$Li & 2724049 & 26 & 19548\\
        H$_3$NO & 787 & 26 & 22080\\
        OCH$_3$ & 123146 & 26 & 39392\\
        LiBH$_4$ & 4148881 & 28 & 27473\\
        CH$_3$OH & 887 & 28 & 30419\\
        C$_4$H$_8$O$_2$ & 8857 & 76 & 1614647\\
        C$_8$H$_6$ & 12302244 & 92 & 1897809\\ \bottomrule
    \end{tabular}
\end{table}

\begin{figure}[!p]
    \centering
    \input{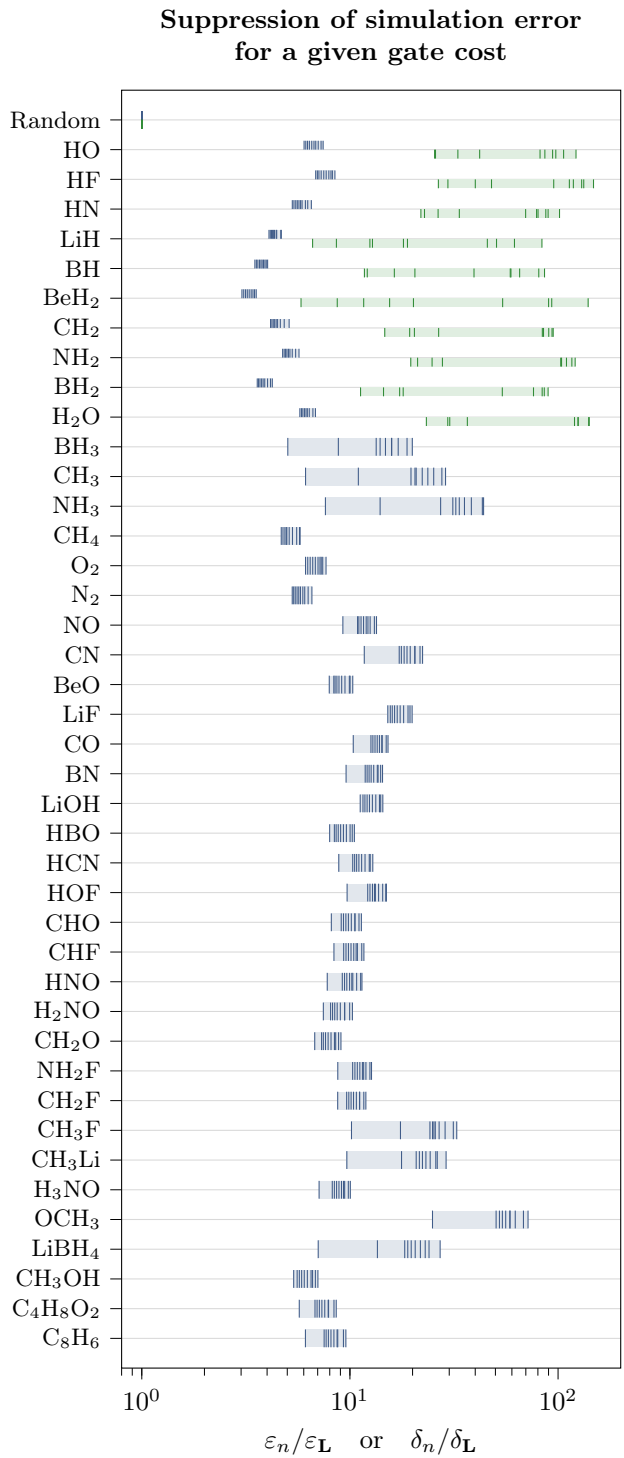}
    \caption{Ratio of the errors obtained without, versus with, our modification for different molecules at identical implementation cost, using a time step of $t_\infty$. Errors were evaluated at cost values $C_{\vec{L}} = (1\ldots 10)L$. Each vertical line represents the ratio of errors at some cost $C_{\vec{L}}$, the shaded areas indicate the range from the smallest to the largest data point. The advantage of our modified algorithm therefore increases left-to-right. Top-bottom split data indicates ratios of error bounds $\varepsilon_n/\varepsilon_{\vec{L}}$ at the top (marked in blue) and ratios of numerically obtained errors $\delta_n/\delta_{\vec{L}}$ at the bottom (marked in green). If there is no split, the blue lines represent error bounds only.}
    \label{fig:error_advantage}
\end{figure}
\begin{figure}[!p]
    \centering
    \input{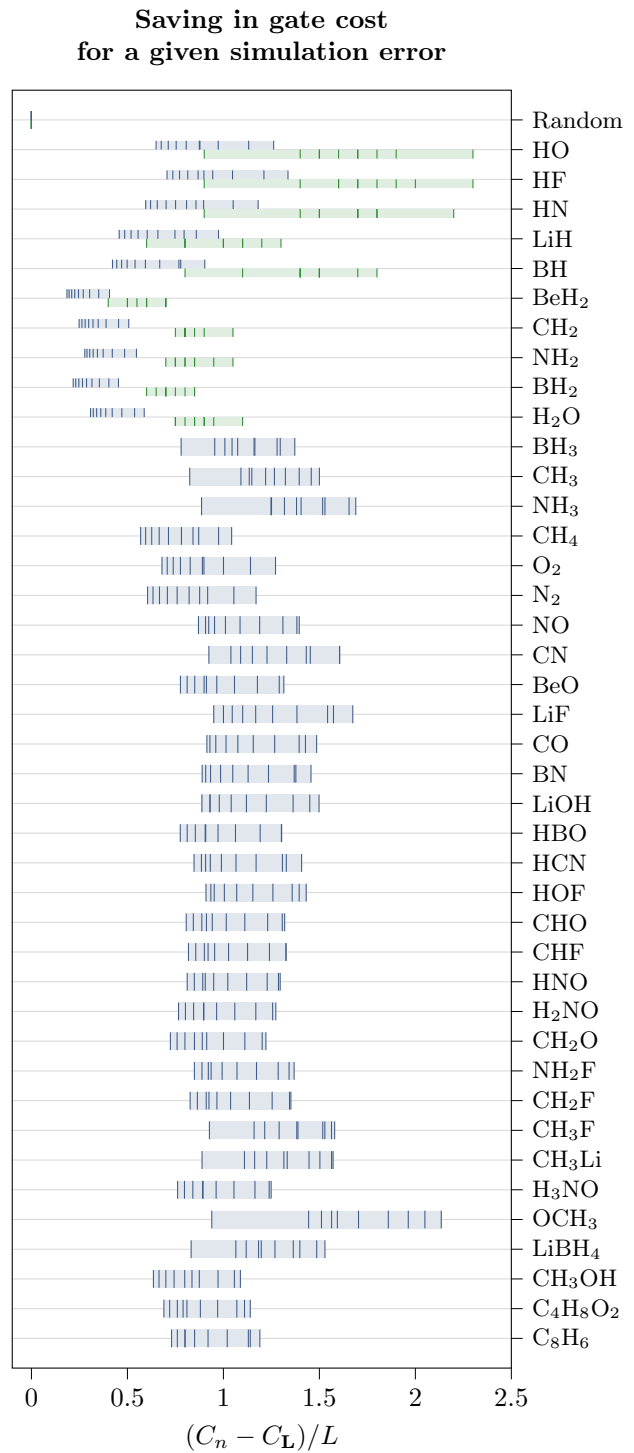}
    \caption{Difference between the cost of full expansions $C_n=nL$ to order $n=1\ldots10$ and the cost of an iteratively constructed circuit $C_{\vec{L}}$ to arrive at the same error, normalized to the cost of one full order $L$, for each molecule. Each vertical line represents the difference at some value of $n$, the shaded areas indicate the range from the smallest to the largest data point. The advantage of our modified algorithm therefore increases left-to-right. The time step size is $t_\infty$. Top-bottom split data indicates differences for error bounds at the top (marked in blue) and for numerically obtained errors on the bottom (marked in green). If there is no split, the blue lines represent error bounds only.}
    \label{fig:cost_advantage}
\end{figure}

We may expect our modification to be advantageous whenever the magnitudes of $\alpha_\ell$ vary over several orders of magnitude because this allows terms in low orders containing small $\alpha_\ell$ to be smaller than terms in higher orders containing large $\alpha_\ell$. Such magnitude distributions are often found in electronic structure Hamiltonians for molecules~\cite{helgaker2014molecular}. Because the efficiency of our method depends critically on the specific amplitudes in the Hamiltonian, analytical results are hard to obtain. Therefore we resort to a numerical study comparing the accuracy of the modification to the method in~\cite{berry2015taylor} for a group of molecules listed in \cref{tbl:molecules}.

\begin{figure}[!b]
    \centering
    \input{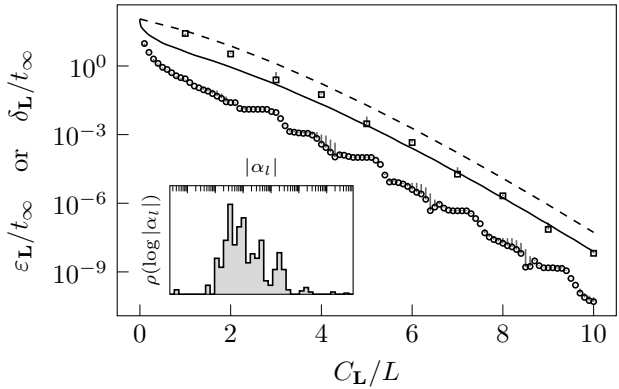}
    \vspace{-2ex}
    \caption{Accuracy of the Taylor expansion for the electronic Hamiltonian of hydrogen fluoride (HF), at time step size $t_\infty$, in terms of the error per unit time vs the circuit cost $C_{\vec{L}}$ as defined in \cref{eq:c} per cost of a full order. Lines are the error bounds $\varepsilon_{\vec{L}}$ for the unmodified~\ref{line:full_bound_hf} and modified~\ref{line:opt_bound_hf} circuit. Squares~\ref{line:full_amp} are the numerically obtained errors $\delta_{\vec{L}}$ for fully expanded orders, circles~\ref{line:opt_amp} analogous for partial orders. The vertical gray bars point to where the error would be if we could implement $U_{\vec{L}}$ without the amplification step. The inset shows the distribution $\rho$ of the logarithms of weights in the Hamiltonian $\log|\alpha_\ell|$.}
    \vspace{2ex}
    \label{fig:hf}
\end{figure}
\begin{figure}[!b]
    \centering
    \input{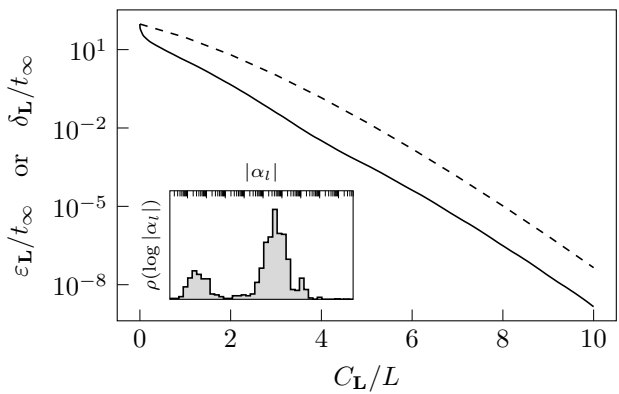}
    \vspace{-2ex}
    \caption{Identical plot to~\cref{fig:hf} showing only the convergence of error bounds for ammonia (NH$_3$). Notice the increasing distance between the unmodified~\ref{line:full_bound_nh3} and modified~\ref{line:opt_bound_nh3} variants of the algorithm between the orders~2 and 4, which is not present in~\cref{fig:hf}. It is caused by the two distinct clusters in the distribution of the logarithmic weights in the Hamiltonian visible in the inset.}
    \label{fig:nh3}
\end{figure}

The Hamiltonians for these molecules were obtained using OpenFermion~\cite{mcclean2017openfermion} and PySCF~\cite{pyscf}, with the basis set STO-3G~\cite{hehre1969sto}, and geometry data retrieved from PubChem~\cite{pubchem} and the NIST Computational Chemistry Comparison and Benchmark Database~\cite{cccbdb}. Mapping from second quantization to spin operators was done using the Jordan-Wigner transformation~\cite{jordan-wigner1928transform}. To showcase that our method yields improvements regardless of the basis set used, we also performed calculations using the cc-pVDZ, cc-pVTZ, and cc-pVQZ basis sets~\cite{dunning1989gaussian,prascher2011gaussian} for H$_2$ and LiH. The results are shown in~\cref{sec:basis_sets}.

In addition to the listed molecules, we also replaced the coefficients of the Hamiltonian for LiH with random numbers from a normal distribution with $\mu=1$ and $\sigma = 0.1$, to show the vanishing effect of our modification whenever all weights are similar. These results are labeled \emph{Random}.

\Cref{fig:hf} shows the error bounds as well as the numerically evaluated exact errors per unit time for hydrogen fluoride. Compared to the expansion to full orders, we see that our modification leads to a much quicker decrease of the error bound as well as the exact error in the range $0 < C_{\vec{L}} < L$, followed by very similar convergence for $C_{\vec{L}} > L$. This pattern is consistent with the convergence we observed for most other molecules we calculated.

Some compounds in our set -- namely BH$_3$, CH$_3$, NH$_3$, CH$_3$F, CH$_3$Li, OCH$_3$, and LiBH$_4$ -- show a slightly different behavior, where the ratio of error bounds using the modified and unmodified versions increases once more later in the iteration. \Cref{fig:nh3} illustrates this using NH$_3$ as an example. The delayed convergence is caused by a distinct second peak in the distribution of the logarithms of weights in the Hamiltonian $\log|\alpha_\ell|$, present in the mentioned molecules. The rest of our set shows distributions rather similar to that of HF in the inset in~\cref{fig:hf}. This disparity is also visible in the summarized results in~\cref{fig:error_advantage,fig:cost_advantage}, where the spread for the mentioned molecules (especially for the error in~\cref{fig:error_advantage}) is much greater than for the rest.

To summarize the results for all molecules, we obtained the ratio of the errors of the original and the modified version at cost values $C_{\vec{L}} = nL$, with $ n=1\ldots 10$, where the time step was set to $t_\infty$ for each respective molecule. The results are depicted in \cref{fig:error_advantage}. Across the listed molecules, our modification consistently yields errors roughly one order of magnitude lower than the unmodified method at equivalent costs, with some ratios as low as 3 and some as high as 100.

We also compared the cost required to obtain a certain error threshold. To this end, the errors $\delta_n$ of the expansions to full orders $n$ were calculated, and the cost $C_{\vec{L}}$ of the modified version to yield the same error was recorded. The results are depicted in \cref{fig:cost_advantage}. Using the modified method leads to saving approximately one order in most cases, i.e. the accuracy obtained by expanding $n$ full orders can be produced with a cost of $C_{\vec{L}} = (n-1) L$.

Our results show no strong correlation with neither the number of qubits in the Hamiltonian nor the number of terms $L$. Therefore we presume that these properties will also hold for other chemical Hamiltonians obtained in the same way.

\section{Conclusion} \label{sec:conclusion}
We demonstrated that a natural extension of the method proposed in~\cite{berry2015taylor} can lead to noticeable improvements in the convergence of the approximation when used for electronic structure Hamiltonians of molecules. The asymptotic behavior is equivalent; however, minimizing the required number of gates will be important for implementations on actual quantum hardware.

Our modification does not need the introduction of any new subroutines. It only rearranges the gate construction to facilitate quicker convergence of both the error bound and the actual error.

Due to the lack of analytic relations between the amplitudes in the Hamiltonians we investigated, only numeric results are available. However, because of the relatively large sample size of molecules we considered, it stands to reason that this behavior will generalize to a large portion of electronic structure Hamiltonians.

As mentioned in the introduction, in light of other methods like qubitization and quantum signal processing, the truncated Taylor series may be of particular relevance for diagonally dominant and time-dependent Hamiltonians. Investigating the suitability of our modification to such problems would therefore be an interesting question for future work.

Furthermore, combining our proposed adaptations with the improvements by~\citet{novo2016improved}, who add an additional correction step to the method, could also be worth exploring.

\begin{acknowledgments}
The authors would like to acknowledge the use of the University of Oxford Advanced Research Computing (ARC) facility (\textsc{doi}\,{\small\doi{10.5281/zenodo.22558}}) in carrying out this work. ETC was supported by the EPSRC (grant no. EP/M024261/1). SCB acknowledges support from the EU Flagship project AQTION, the NQIT Hub (EP/M013243/1), and the QCS Hub (EP/T001062/1). The authors also thank Bálint Koczor and Sam McArdle for useful discussions, as well as Yingkai Ouyang for helpful comments on the manuscript.
\end{acknowledgments}

\interlinepenalty=8000

\bibliography{references}

\begin{thebibliography}{48}
\providecommand{\natexlab}[1]{#1}
\providecommand{\url}[1]{\texttt{#1}}
\expandafter\ifx\csname urlstyle\endcsname\relax
  \providecommand{\doi}[1]{#1}\else
  \providecommand{\doi}{\begingroup \urlstyle{rm}\Url}\fi
  \providecommand{\lword}[1]{\leavevmode\nobreak\hskip0pt
  plus\linewidth\penalty100\hskip0pt plus-\linewidth\nobreak{\mbox{#1}}}
  \providecommand{\arxiv}[1]{\href{https://arxiv.org/abs/#1}{#1}}

\bibitem[Feynman(1982)]{feynman1982simulating}
R.\,P. Feynman.
\newblock Simulating physics with computers.
\newblock \emph{International Journal of Theoretical Physics}, 21\penalty0
  (6):\penalty0 467--488, Jun 1982.
\newblock
  \lword{\color{gray}\textsc{doi}~{\footnotesize\doi{10.1007/BF02650179}}}.

\bibitem[Lloyd(1996)]{lloyd1996universal}
S.~Lloyd.
\newblock Universal quantum simulators.
\newblock \emph{Science}, 273:\penalty0 1073--1078, Aug 1996.
\newblock
  \lword{\color{gray}\textsc{doi}~{\footnotesize\doi{10.1126/science.273.5278.1073}}}.

\bibitem[Abrams and Lloyd(1997)]{abrams1997sim}
D.\,S. Abrams and S.~Lloyd.
\newblock Simulation of many-body Fermi systems on a universal quantum
  computer.
\newblock \emph{Physical Review Letters}, 79\penalty0 (13):\penalty0
  2586–2589, Sep 1997.
\newblock
  \lword{\color{gray}\textsc{doi}~{\footnotesize\doi{10.1103/PhysRevLett.79.2586}}}.

\bibitem[Ortiz et~al.(2001)Ortiz, Gubernatis, Knill, and
  La\-flamme]{ortiz2001quantum}
G.~Ortiz, J.\,E. Gubernatis, E.~Knill, and R.~La\-flamme.
\newblock Quantum algorithms for fermionic simulations.
\newblock \emph{Physical Review A}, 64\penalty0 (2), Jul 2001.
\newblock
  \lword{\color{gray}\textsc{doi}~{\footnotesize\doi{10.1103/PhysRevA.64.022319}}}.

\bibitem[Aspuru-Guzik(2005)]{aspuru2005simulated}
A.~Aspuru-Guzik.
\newblock Simulated quantum computation of molecular energies.
\newblock \emph{Science}, 309\penalty0 (5741):\penalty0 1704–1707, Sep 2005.
\newblock
  \lword{\color{gray}\textsc{doi}~{\footnotesize\doi{10.1126/science.1113479}}}.

\bibitem[Berry et~al.(2006)Berry, Ahokas, Cleve, and
  Sanders]{berry2006efficient}
D.\,W. Berry, G.~Ahokas, R.~Cleve, and B.\,C. Sanders.
\newblock Efficient quantum algorithms for simulating sparse Hamiltonians.
\newblock \emph{Communications in Mathematical Phy\-sics}, 270\penalty0
  (2):\penalty0 359–371, Dec 2006.
\newblock
  \lword{\color{gray}\textsc{doi}~{\footnotesize\doi{10.1007/s00220-006-0150-x}}}.

\bibitem[Wang et~al.(2008)Wang, Kais, Aspuru-Guzik, and
  Hoffmann]{wang2008algorithm}
H.~Wang, S.~Kais, A.~Aspuru-Guzik, and M.\,R. Hoffmann.
\newblock Quantum algorithm for obtaining the energy spectrum of molecular
  systems.
\newblock \emph{Physical Chemistry Chemical Physics}, 10\penalty0
  (35):\penalty0 5388, 2008.
\newblock
  \lword{\color{gray}\textsc{doi}~{\footnotesize\doi{10.1039/b804804e}}}.

\bibitem[Whitfield et~al.(2011)Whitfield, Biamonte, and
  Aspuru-Guzik]{whitfield2010simulation}
J.\,D. Whitfield, J.~Biamonte, and A.~Aspuru-Guzik.
\newblock Simulation of electronic structure Hamiltonians using quantum
  computers.
\newblock \emph{Molecular Physics}, 109\penalty0 (5):\penalty0 735--750, 2011.
\newblock
  \lword{\color{gray}\textsc{doi}~{\footnotesize\doi{10.1080/00268976.2011.552441}}}.

\bibitem[Campbell(2019)]{campbell2019random}
E.~Campbell.
\newblock Random compiler for fast Hamiltonian simulation.
\newblock \emph{Physical Review Letters}, 123\penalty0 (7):\penalty0 070503,
  Aug 2019.
\newblock
  \lword{\color{gray}\textsc{doi}~{\footnotesize\doi{10.1103/physrevlett.123.070503}}}.

\bibitem[Childs and Su(2019)]{childs2019nearly}
A.\,M. Childs and Y.~Su.
\newblock Nearly optimal lattice simulation by product formulas.
\newblock \emph{Physical Review Letters}, 123\penalty0 (5):\penalty0 050503,
  Aug 2019.
\newblock
  \lword{\color{gray}\textsc{doi}~{\footnotesize\doi{10.1103/physrevlett.123.050503}}}.

\bibitem[Childs et~al.(2019)Childs, Ostrander, and Su]{childs2019faster}
A.\,M. Childs, A.~Ostrander, and Y.~Su.
\newblock Faster quantum simulation by randomization.
\newblock \emph{Quantum}, 3:\penalty0 182, Sep 2019.
\newblock
  \lword{\color{gray}\textsc{doi}~{\footnotesize\doi{10.22331/q-2019-09-02-182}}}.

\bibitem[Ouyang et~al.(2020)Ouyang, White, and Campbell]{ouyang2020compilation}
Y.~Ouyang, D.\,R. White, and E.\,T. Campbell.
\newblock Compilation by stochastic Hamiltonian sparsification.
\newblock \emph{Quantum}, 4:\penalty0 235, Feb 2020.
\newblock
  \lword{\color{gray}\textsc{doi}~{\footnotesize\doi{10.22331/q-2020-02-27-235}}}.

\bibitem[Trotter(1959)]{trotter1959product}
H.\,F. Trotter.
\newblock On the product of semi-groups of operators.
\newblock \emph{Proceedings of the American Mathematical Society}, 10\penalty0
  (4):\penalty0 545--551, 1959.
\newblock
  \lword{\color{gray}\textsc{doi}~{\footnotesize\doi{10.2307/2033649}}}.

\bibitem[Suzuki(1976)]{suzuki1976product}
M.~Suzuki.
\newblock Generalized Trotter's formula and systematic approximants of
  exponential operators and inner derivations with applications to many-body
  problems.
\newblock \emph{Communications in Mathematical Physics}, 51\penalty0
  (2):\penalty0 183--190, 1976.
\newblock
  \lword{\color{gray}\textsc{doi}~{\footnotesize\doi{10.1007/BF01609348}}}.

\bibitem[Wecker et~al.(2014)Wecker, Bauer, Clark, Hastings, and
  Troyer]{wecker2014gatecount}
D.~Wecker, B.~Bauer, B.\,K. Clark, M.\,B. Hastings, and M.~Troyer.
\newblock Gate-count estimates for performing quantum chemistry on small
  quantum computers.
\newblock \emph{Physical Review A}, 90:\penalty0 022305, Aug 2014.
\newblock
  \lword{\color{gray}\textsc{doi}~{\footnotesize\doi{10.1103/PhysRevA.90.022305}}}.

\bibitem[Babbush et~al.(2015)Babbush, McClean, Wecker, Aspuru-Guzik, and
  Wiebe]{babbush2015chemical}
R.~Babbush, J.~McClean, D.~Wecker, A.~Aspuru-Guzik, and N.~Wiebe.
\newblock Chemical basis of Trotter-Suzuki errors in quantum chemistry
  simulation.
\newblock \emph{Physical Review A}, 91\penalty0 (2), Feb 2015.
\newblock
  \lword{\color{gray}\textsc{doi}~{\footnotesize\doi{10.1103/PhysRevA.91.022311}}}.

\bibitem[Hastings et~al.(2014)Hastings, Wecker, Bauer, and
  Tro\-yer]{hastings2014improving}
M.\,B. Hastings, D.~Wecker, B.~Bauer, and M.~Tro\-yer.
\newblock Improving quantum algorithms for quantum chemistry, 2014.
\newblock \lword{\color{gray}ar\textsc{x}iv~{\footnotesize\arxiv{1403.1539}}}.

\bibitem[Poulin et~al.(2014)Poulin, Hastings, Wecker, Wiebe, Doherty, and
  Troyer]{poulin2014trotter}
D.~Poulin, M.\,B. Hastings, D.~Wecker, N.~Wiebe, A.\,C. Doherty, and M.~Troyer.
\newblock The Trotter step size required for accurate quantum simulation of
  quantum chemistry, 2014.
\newblock \lword{\color{gray}ar\textsc{x}iv~{\footnotesize\arxiv{1406.4920}}}.

\bibitem[McClean et~al.(2014)McClean, Babbush, Love, and
  Aspuru-Guzik]{mcclean2014locality}
J.\,R. McClean, R.~Babbush, P.\,J. Love, and A.~Aspuru-Guzik.
\newblock Exploiting locality in quantum computation for quantum chemistry.
\newblock \emph{The Journal of Physical Chemistry Letters}, 5\penalty0
  (24):\penalty0 4368–4380, Dec 2014.
\newblock
  \lword{\color{gray}\textsc{doi}~{\footnotesize\doi{10.1021/jz501649m}}}.

\bibitem[Childs et~al.(2021)Childs, Su, Tran, Wiebe, and
  Zhu]{childs2019trotter}
A.\,M. Childs, Y.~Su, M.\,C. Tran, N.~Wiebe, and S.~Zhu.
\newblock Theory of Trotter Error with Commutator Scaling.
\newblock \emph{Physical Review X}, 11:\penalty0 011020, Feb 2021.
\newblock
  \lword{\color{gray}\textsc{doi}~{\footnotesize\doi{10.1103/PhysRevX.11.011020}}}.

\bibitem[Childs and Wiebe(2012)]{childs2012lcu}
A.\,M. Childs and N.~Wiebe.
\newblock Hamiltonian simulation using linear combinations of unitary
  operations.
\newblock \emph{Quantum Information \& Computation}, 12\penalty0
  (11--12):\penalty0 901--924, Nov 2012.
\newblock
  \lword{\color{gray}\textsc{doi}~{\footnotesize\doi{10.26421/QIC12.11-12}}}.

\bibitem[Berry et~al.(2014)Berry, Childs, Cleve, Kothari, and
  Somma]{berry2014improvement}
D.\,W. Berry, A.\,M. Childs, R.~Cleve, R.~Kothari, and R.\,D. Somma.
\newblock Exponential improvement in precision for simulating sparse
  Hamiltonians.
\newblock In \emph{Proceedings of the Forty-Sixth Annual ACM Symposium on
  Theory of Computing}, STOC ’14, page 283–292, New York, NY, USA, 2014.
  Association for Computing Machinery.
\newblock
  \lword{\color{gray}\textsc{doi}~{\footnotesize\doi{10.1145/2591796.2591854}}}.

\bibitem[Low et~al.(2019)Low, Kliuchnikov, and Wiebe]{low2019wellconditioned}
G.\,H. Low, V.~Kliuchnikov, and N.~Wiebe.
\newblock Well-conditioned multiproduct Hamiltonian simulation, 2019.
\newblock \lword{\color{gray}ar\textsc{x}iv~{\footnotesize\arxiv{1907.11679}}}.

\bibitem[Berry et~al.(2015{\natexlab{a}})Berry, Childs, and
  Kothari]{berry2015hamiltonian}
D.\,W. Berry, A.\,M. Childs, and R.~Kothari.
\newblock Hamiltonian Simulation with nearly optimal dependence on all
  parameters.
\newblock In \emph{2015 IEEE 56th Annual Symposium on Foundations of Computer
  Science}, pages 792--809, 2015{\natexlab{a}}.
\newblock
  \lword{\color{gray}\textsc{doi}~{\footnotesize\doi{10.1109/FOCS.2015.54}}}.

\bibitem[Low and Chuang(2017)]{low2017optimal}
G.\,H. Low and I.\,L. Chuang.
\newblock Optimal Hamiltonian simulation by quantum signal processing.
\newblock \emph{Physical Review Letters}, 118\penalty0 (1):\penalty0 010501,
  Jan 2017.
\newblock
  \lword{\color{gray}\textsc{doi}~{\footnotesize\doi{10.1103/physrevlett.118.010501}}}.

\bibitem[Gilyén et~al.(2019)Gilyén, Su, Low, and Wiebe]{gilyen2019quantum}
A.~Gilyén, Y.~Su, G.\,H. Low, and N.~Wiebe.
\newblock Quantum singular value transformation and beyond: exponential
  improvements for quantum matrix arithmetics.
\newblock In \emph{Proceedings of the 51st Annual {ACM} {SIGACT} Symposium on
  Theory of Computing}, STOC ’19, pages 193--204. {ACM} Press, 2019.
\newblock
  \lword{\color{gray}\textsc{doi}~{\footnotesize\doi{10.1145/3313276.3316366}}}.

\bibitem[Low and Chuang(2019)]{low2019qubitization}
G.\,H. Low and I.\,L. Chuang.
\newblock Hamiltonian simulation by qubitization.
\newblock \emph{Quantum}, 3:\penalty0 163, Jul 2019.
\newblock
  \lword{\color{gray}\textsc{doi}~{\footnotesize\doi{10.22331/q-2019-07-12-163}}}.

\bibitem[Berry et~al.(2019)Berry, Gidney, Motta, McClean, and
  Babbush]{berry2019qubitization}
D.\,W. Berry, C.~Gidney, M.~Motta, J.\,R. McClean, and R.~Babbush.
\newblock Qubitization of arbitrary basis quantum chemistry leveraging sparsity
  and low rank factorization.
\newblock \emph{Quantum}, 3:\penalty0 208, Dec 2019.
\newblock
  \lword{\color{gray}\textsc{doi}~{\footnotesize\doi{10.22331/q-2019-12-02-208}}}.

\bibitem[Berry et~al.(2015{\natexlab{b}})Berry, Childs, Cleve, Kothari, and
  Somma]{berry2015taylor}
D.\,W. Berry, A.\,M. Childs, R.~Cleve, R.~Kothari, and R.\,D. Somma.
\newblock Simulating Hamiltonian dynamics with a truncated Taylor series.
\newblock \emph{Physical Review Letters}, 114:\penalty0 090502, Mar
  2015{\natexlab{b}}.
\newblock
  \lword{\color{gray}\textsc{doi}~{\footnotesize\doi{10.1103/PhysRevLett.114.090502}}}.

\bibitem[Childs et~al.(2018)Childs, Maslov, Nam, Ross, and
  Su]{childs2018speedup}
A.\,M. Childs, D.~Maslov, Y.~Nam, N.\,J. Ross, and Y.~Su.
\newblock Toward the first quantum simulation with quantum speedup.
\newblock \emph{Proceedings of the National Academy of Sciences}, 115\penalty0
  (38):\penalty0 9456–9461, Sep 2018.
\newblock
  \lword{\color{gray}\textsc{doi}~{\footnotesize\doi{10.1073/pnas.1801723115}}}.

\bibitem[Babbush et~al.(2018)Babbush, Gidney, Berry, Wiebe, McClean, Paler,
  Fowler, and Neven]{babbush2018encoding}
R.~Babbush, C.~Gidney, D.\,W. Berry, N.~Wiebe, J.~McClean, A.~Paler, A.~Fowler,
  and H.~Neven.
\newblock Encoding electronic spectra in quantum circuits with linear T
  complexity.
\newblock \emph{Physical Review X}, 8:\penalty0 041015, Oct 2018.
\newblock
  \lword{\color{gray}\textsc{doi}~{\footnotesize\doi{10.1103/PhysRevX.8.041015}}}.

\bibitem[Low and Wiebe(2018)]{low2018hamiltonian}
G.\,H. Low and N.~Wiebe.
\newblock Hamiltonian simulation in the interaction picture, 2018.
\newblock \lword{\color{gray}ar\textsc{x}iv~{\footnotesize\arxiv{1805.00675}}}.

\bibitem[Babbush et~al.(2019)Babbush, Berry, McClean, and
  Neven]{babbush2019chemistry}
R.~Babbush, D.\,W. Berry, J.\,R. McClean, and H.~Neven.
\newblock Quantum simulation of chemistry with sublinear scaling in basis size.
\newblock \emph{npj Quantum Information}, 5\penalty0 (1), Nov 2019.
\newblock
  \lword{\color{gray}\textsc{doi}~{\footnotesize\doi{10.1038/s41534-019-0199-y}}}.

\bibitem[Low et~al.(2018)Low, Kliuchnikov, and Schaeffer]{low2018trading}
G.\,H. Low, V.~Kliuchnikov, and L.~Schaeffer.
\newblock Trading T-gates for dirty qubits in state preparation and unitary
  synthesis, 2018.
\newblock \lword{\color{gray}ar\textsc{x}iv~{\footnotesize\arxiv{1812.00954}}}.

\bibitem[Bravyi and Kitaev(2005)]{bravyi2005magic}
S.~Bravyi and A.~Kitaev.
\newblock Universal quantum computation with ideal Clifford gates and noisy
  ancillas.
\newblock \emph{Phys. Rev. A}, 71:\penalty0 022316, Feb 2005.
\newblock
  \lword{\color{gray}\textsc{doi}~{\footnotesize\doi{10.1103/PhysRevA.71.022316}}}.

\bibitem[Howard et~al.(2014)Howard, Wallman, Veitch, and
  Emerson]{howard2014magic}
M.~Howard, J.~Wallman, V.~Veitch, and J.~Emerson.
\newblock Contextuality supplies the magic for quantum computation.
\newblock \emph{Nature}, 510\penalty0 (7505):\penalty0 351–355, Jun 2014.
\newblock
  \lword{\color{gray}\textsc{doi}~{\footnotesize\doi{10.1038/nature13460}}}.

\bibitem[Campbell et~al.(2017)Campbell, Terhal, and Vuillot]{campbell2017magic}
E.\,T. Campbell, B.\,M. Terhal, and C.~Vuillot.
\newblock Roads towards fault-tolerant universal quantum computation.
\newblock \emph{Nature}, 549\penalty0 (7671):\penalty0 172–179, Sep 2017.
\newblock
  \lword{\color{gray}\textsc{doi}~{\footnotesize\doi{10.1038/nature23460}}}.

\bibitem[Nielsen and Chuang(2010)]{nielsen_chuang}
M.\,A. Nielsen and I.\,L. Chuang.
\newblock \emph{Quantum Computation and Quantum Information}, chapter~4, pages
  182--184.
\newblock Cambridge University Press, 2nd edition, 2010.
\newblock
  \lword{\color{gray}\textsc{isbn}~{\footnotesize978\,1\,107\,00217\,3}}.

\bibitem[Helgaker et~al.(2000)Helgaker, Jorgensen, and
  Olsen]{helgaker2014molecular}
T.~Helgaker, P.~Jorgensen, and J.~Olsen.
\newblock \emph{Molecular electronic-structure theory}.
\newblock John Wiley \& Sons, 2000.
\newblock
  \lword{\color{gray}\textsc{isbn}~{\footnotesize978\,1\,118\,53147\,1}}.

\bibitem[McClean et~al.(2017)McClean, Sung, Kivlichan, Cao, Dai, Fried, Gidney,
  Gimby, Gok\-hale, Häner, Hardikar, Havlíček, Higgott, Huang, Izaac, Jiang,
  Liu, McArdle, Neeley, O'Brien, O'Gor\-man, Ozfidan, Radin, Romero, Rubin,
  Sawaya, Setia, Sim, Steiger, Steudtner, Sun, Sun, Wang, Zhang, and
  Babbush]{mcclean2017openfermion}
J.\,R. McClean, K.\,J. Sung, I.\,D. Kivlichan, Y.~Cao, C.~Dai, E.\,S. Fried,
  C.~Gidney, B.~Gimby, P.~Gok\-hale, T.~Häner, T.~Hardikar, V.~Havlíček,
  O.~Higgott, C.~Huang, J.~Izaac, Z.~Jiang, X.~Liu, S.~McArdle, M.~Neeley,
  T.~O'Brien, B.~O'Gor\-man, I.~Ozfidan, M.\,D. Radin, J.~Romero, N.~Rubin,
  N.\,P.\,D. Sawaya, K.~Setia, S.~Sim, D.\,S. Steiger, M.~Steudtner, Q.~Sun,
  W.~Sun, D.~Wang, F.~Zhang, and R.~Babbush.
\newblock OpenFermion: The electronic structure package for quantum computers,
  2017.
\newblock \lword{\color{gray}ar\textsc{x}iv~{\footnotesize\arxiv{1710.07629}}}.

\bibitem[Sun et~al.(2017)Sun, Berkelbach, Blunt, Booth, Guo, Li, Liu, McClain,
  Sayfutyarova, Sharma, Wouters, and Chan]{pyscf}
Q.~Sun, T.\,C. Berkelbach, N.\,S. Blunt, G.\,H. Booth, S.~Guo, Z.~Li, J.~Liu,
  J.\,D. McClain, E.\,R. Sayfutyarova, S.~Sharma, S.~Wouters, and G.\,K. Chan.
\newblock PySCF: the Python‐based simulations of chemistry framework, 2017.
\newblock
  \lword{\color{gray}\textsc{doi}~{\footnotesize\doi{10.1002/wcms.1340}}}.

\bibitem[Hehre et~al.(1969)Hehre, Stewart, and Pople]{hehre1969sto}
W.\,J. Hehre, R.\,F. Stewart, and J.\,A. Pople.
\newblock Self‐consistent molecular‐orbital methods. I. Use of Gaussian
  expansions of Slater‐type atomic orbitals.
\newblock \emph{The Journal of Chemical Physics}, 51\penalty0 (6):\penalty0
  2657--2664, 1969.
\newblock
  \lword{\color{gray}\textsc{doi}~{\footnotesize\doi{10.1063/1.1672392}}}.

\bibitem[Kim et~al.(2018)Kim, Chen, Cheng, Gindulyte, He, He, Li, Shoemaker,
  Thiessen, Yu, Zaslavsky, Zhang, and Bolton]{pubchem}
S.~Kim, J.~Chen, T.~Cheng, A.~Gindulyte, J.~He, S.~He, Q.~Li, B.\,A. Shoemaker,
  P.\,A. Thiessen, B.~Yu, L.~Zaslavsky, J.~Zhang, and E.\,E. Bolton.
\newblock PubChem 2019 update: improved access to chemical data.
\newblock \emph{Nucleic Acids Research}, 47\penalty0 (D1):\penalty0
  D1102--D1109, Oct 2018.
\newblock
  \lword{\color{gray}\textsc{doi}~{\footnotesize\doi{10.1093/nar/gky1033}}}.

\bibitem[{Johnson III}(2019)]{cccbdb}
R.\,D. {Johnson III}.
\newblock NIST Computational Chem\-is\-try Comparison and Benchmark Database,
  NIST Standard Reference Database Number 101, Release 20, Aug 2019.
\newblock
  \lword{\color{gray}\textsc{url}~{\footnotesize\url{http://cccbdb.nist.gov}}}.

\bibitem[Jordan and Wigner(1928)]{jordan-wigner1928transform}
P.~Jordan and E.~Wigner.
\newblock Über das Paulische Äquivalenzverbot.
\newblock \emph{Zeitschrift für Physik}, 47\penalty0 (9):\penalty0 631--651,
  1928.
\newblock
  \lword{\color{gray}\textsc{doi}~{\footnotesize\doi{10.1007/BF01331938}}}.

\bibitem[Dunning~Jr(1989)]{dunning1989gaussian}
T.\,H. Dunning~Jr.
\newblock Gaussian basis sets for use in correlated molecular calculations. I.
  The atoms boron through neon and hydrogen.
\newblock \emph{The Journal of chemical physics}, 90\penalty0 (2):\penalty0
  1007--1023, 1989.
\newblock
  \lword{\color{gray}\textsc{doi}~{\footnotesize\doi{10.1063/1.456153}}}.

\bibitem[Prascher et~al.(2011)Prascher, Woon, Peterson, Dunning, and
  Wilson]{prascher2011gaussian}
B.\,P. Prascher, D.\,E. Woon, K.\,A. Peterson, T.\,H. Dunning, and A.\,K.
  Wilson.
\newblock Gaussian basis sets for use in correlated molecular calculations.
  VII. Valence, core-valence, and scalar relativistic basis sets for Li, Be,
  Na, and Mg.
\newblock \emph{Theoretical Chemistry Accounts}, 128\penalty0 (1):\penalty0
  69--82, 2011.
\newblock
  \lword{\color{gray}\textsc{doi}~{\footnotesize\doi{10.1007/s00214-010-0764-0}}}.

\bibitem[Novo and Berry(2016)]{novo2016improved}
L.~Novo and D.\,W. Berry.
\newblock Improved Hamiltonian simulation via a truncated Taylor series and
  corrections, 2016.
\newblock \lword{\color{gray}ar\textsc{x}iv~{\footnotesize\arxiv{1611.10033}}}.

\end{thebibliography}

\onecolumngrid
\appendix

\section{Effect of basis set choice} \label{sec:basis_sets}
To ensure the effect we observed also holds for larger basis sets, we additionally performed calculations for H$_2$ and LiH with cc-pVDZ, cc-pVTZ, and cc-pVQZ sets.

\begin{figure}[hbtp]
\begin{minipage}[t]{.48\linewidth}\vspace{0pt}
    \centering
\begin{tikzpicture}

\definecolor{color0}{rgb}{0.250980392156863,0.36078431372549,0.537254901960784}
\definecolor{color1}{rgb}{0.887647058823529,0.904117647058823,0.930588235294118}

\begin{axis}[
height=.25\textheight,
width=.91\linewidth,
log basis x={10},
tick align=outside,
tick pos=left,
title style={align=center},
x grid style={white!69.0196078431373!black},
xlabel={\(\displaystyle \varepsilon_n/\varepsilon_{\vec{L}}\) \ \ or \ \ \(\displaystyle \delta_n/\delta_{\vec{L}}\)},
xmin=0.8, xmax=200,
xmode=log,
xtick style={color=black},
y dir=reverse,
y grid style={white!85!black},
y tick label style={font=\small},
ymajorgrids,
ymin=-1, ymax=7,
ytick style={color=black},
ytick={0,1,2,3,4,5,6},
yticklabels={H\(\displaystyle _2\) STO-3G,H\(\displaystyle _2\) cc-pVDZ,H\(\displaystyle _2\) cc-pVTZ,H\(\displaystyle _2\) cc-pVQZ,LiH STO-3G,LiH cc-pVDZ,LiH cc-pVTZ}
]
\draw[draw=none,fill=color1] (axis cs:1,-0.3) rectangle (axis cs:1.08597006654364,0.3);
\draw[draw=none,fill=color1] (axis cs:4.07704886568654,0.7) rectangle (axis cs:4.73874953959613,1.3);
\draw[draw=none,fill=color1] (axis cs:4.30529511672119,1.7) rectangle (axis cs:5.16287090335212,2.3);
\draw[draw=none,fill=color1] (axis cs:4.01132267159154,2.7) rectangle (axis cs:4.87003046524566,3.3);
\draw[draw=none,fill=color1] (axis cs:3.94151439131919,3.7) rectangle (axis cs:4.49856003149725,4.3);
\draw[draw=none,fill=color1] (axis cs:6.20639038870579,4.7) rectangle (axis cs:7.49634636351826,5.3);
\draw[draw=none,fill=color1] (axis cs:6.2820100312293,5.7) rectangle (axis cs:7.67246470837748,6.3);
\path [draw=color0]
(axis cs:1.08597006654364,-0.3)
--(axis cs:1.08597006654364,0.3);

\path [draw=color0]
(axis cs:1.01372404536149,-0.3)
--(axis cs:1.01372404536149,0.3);

\path [draw=color0]
(axis cs:1,-0.3)
--(axis cs:1,0.3);

\path [draw=color0]
(axis cs:1,-0.3)
--(axis cs:1,0.3);

\path [draw=color0]
(axis cs:1,-0.3)
--(axis cs:1,0.3);

\path [draw=color0]
(axis cs:1,-0.3)
--(axis cs:1,0.3);

\path [draw=color0]
(axis cs:1,-0.3)
--(axis cs:1,0.3);

\path [draw=color0]
(axis cs:1,-0.3)
--(axis cs:1,0.3);

\path [draw=color0]
(axis cs:1,-0.3)
--(axis cs:1,0.3);

\path [draw=color0]
(axis cs:1,-0.3)
--(axis cs:1,0.3);

\path [draw=color0]
(axis cs:4.49885127666608,0.7)
--(axis cs:4.49885127666608,1.3);

\path [draw=color0]
(axis cs:4.73874953959613,0.7)
--(axis cs:4.73874953959613,1.3);

\path [draw=color0]
(axis cs:4.54369167700147,0.7)
--(axis cs:4.54369167700147,1.3);

\path [draw=color0]
(axis cs:4.40465890003587,0.7)
--(axis cs:4.40465890003587,1.3);

\path [draw=color0]
(axis cs:4.33300935118244,0.7)
--(axis cs:4.33300935118244,1.3);

\path [draw=color0]
(axis cs:4.27362781994548,0.7)
--(axis cs:4.27362781994548,1.3);

\path [draw=color0]
(axis cs:4.21797284485053,0.7)
--(axis cs:4.21797284485053,1.3);

\path [draw=color0]
(axis cs:4.16233156839849,0.7)
--(axis cs:4.16233156839849,1.3);

\path [draw=color0]
(axis cs:4.11525903508118,0.7)
--(axis cs:4.11525903508118,1.3);

\path [draw=color0]
(axis cs:4.07704886568654,0.7)
--(axis cs:4.07704886568654,1.3);

\path [draw=color0]
(axis cs:4.56615406153256,1.7)
--(axis cs:4.56615406153256,2.3);

\path [draw=color0]
(axis cs:5.14848923683138,1.7)
--(axis cs:5.14848923683138,2.3);

\path [draw=color0]
(axis cs:5.16287090335212,1.7)
--(axis cs:5.16287090335212,2.3);

\path [draw=color0]
(axis cs:4.97672770645665,1.7)
--(axis cs:4.97672770645665,2.3);

\path [draw=color0]
(axis cs:4.80819212041967,1.7)
--(axis cs:4.80819212041967,2.3);

\path [draw=color0]
(axis cs:4.6710783399042,1.7)
--(axis cs:4.6710783399042,2.3);

\path [draw=color0]
(axis cs:4.55759697247947,1.7)
--(axis cs:4.55759697247947,2.3);

\path [draw=color0]
(axis cs:4.46263753621292,1.7)
--(axis cs:4.46263753621292,2.3);

\path [draw=color0]
(axis cs:4.37876463233357,1.7)
--(axis cs:4.37876463233357,2.3);

\path [draw=color0]
(axis cs:4.30529511672119,1.7)
--(axis cs:4.30529511672119,2.3);

\path [draw=color0]
(axis cs:4.26274253555316,2.7)
--(axis cs:4.26274253555316,3.3);

\path [draw=color0]
(axis cs:4.87003046524566,2.7)
--(axis cs:4.87003046524566,3.3);

\path [draw=color0]
(axis cs:4.846710183039,2.7)
--(axis cs:4.846710183039,3.3);

\path [draw=color0]
(axis cs:4.67351115986682,2.7)
--(axis cs:4.67351115986682,3.3);

\path [draw=color0]
(axis cs:4.50676511562117,2.7)
--(axis cs:4.50676511562117,3.3);

\path [draw=color0]
(axis cs:4.37401464415335,2.7)
--(axis cs:4.37401464415335,3.3);

\path [draw=color0]
(axis cs:4.26331796178155,2.7)
--(axis cs:4.26331796178155,3.3);

\path [draw=color0]
(axis cs:4.16844917205035,2.7)
--(axis cs:4.16844917205035,3.3);

\path [draw=color0]
(axis cs:4.08552099588359,2.7)
--(axis cs:4.08552099588359,3.3);

\path [draw=color0]
(axis cs:4.01132267159154,2.7)
--(axis cs:4.01132267159154,3.3);

\path [draw=color0]
(axis cs:4.25916146675998,3.7)
--(axis cs:4.25916146675998,4.3);

\path [draw=color0]
(axis cs:4.49856003149725,3.7)
--(axis cs:4.49856003149725,4.3);

\path [draw=color0]
(axis cs:4.46618122916808,3.7)
--(axis cs:4.46618122916808,4.3);

\path [draw=color0]
(axis cs:4.25002978289656,3.7)
--(axis cs:4.25002978289656,4.3);

\path [draw=color0]
(axis cs:4.15838277445898,3.7)
--(axis cs:4.15838277445898,4.3);

\path [draw=color0]
(axis cs:4.13026428684679,3.7)
--(axis cs:4.13026428684679,4.3);

\path [draw=color0]
(axis cs:4.10073047300003,3.7)
--(axis cs:4.10073047300003,4.3);

\path [draw=color0]
(axis cs:4.05394748858842,3.7)
--(axis cs:4.05394748858842,4.3);

\path [draw=color0]
(axis cs:4.0006753415361,3.7)
--(axis cs:4.0006753415361,4.3);

\path [draw=color0]
(axis cs:3.94151439131919,3.7)
--(axis cs:3.94151439131919,4.3);

\path [draw=color0]
(axis cs:6.50262756314478,4.7)
--(axis cs:6.50262756314478,5.3);

\path [draw=color0]
(axis cs:7.49634636351826,4.7)
--(axis cs:7.49634636351826,5.3);

\path [draw=color0]
(axis cs:7.40121163846627,4.7)
--(axis cs:7.40121163846627,5.3);

\path [draw=color0]
(axis cs:7.13203706798774,4.7)
--(axis cs:7.13203706798774,5.3);

\path [draw=color0]
(axis cs:6.90402719595343,4.7)
--(axis cs:6.90402719595343,5.3);

\path [draw=color0]
(axis cs:6.72968328559108,4.7)
--(axis cs:6.72968328559108,5.3);

\path [draw=color0]
(axis cs:6.58097529216399,4.7)
--(axis cs:6.58097529216399,5.3);

\path [draw=color0]
(axis cs:6.44548895871298,4.7)
--(axis cs:6.44548895871298,5.3);

\path [draw=color0]
(axis cs:6.31895777446782,4.7)
--(axis cs:6.31895777446782,5.3);

\path [draw=color0]
(axis cs:6.20639038870579,4.7)
--(axis cs:6.20639038870579,5.3);

\path [draw=color0]
(axis cs:6.46387090933878,5.7)
--(axis cs:6.46387090933878,6.3);

\path [draw=color0]
(axis cs:7.67246470837748,5.7)
--(axis cs:7.67246470837748,6.3);

\path [draw=color0]
(axis cs:7.65330650762276,5.7)
--(axis cs:7.65330650762276,6.3);

\path [draw=color0]
(axis cs:7.33617638047875,5.7)
--(axis cs:7.33617638047875,6.3);

\path [draw=color0]
(axis cs:7.05594523418256,5.7)
--(axis cs:7.05594523418256,6.3);

\path [draw=color0]
(axis cs:6.85020458735965,5.7)
--(axis cs:6.85020458735965,6.3);

\path [draw=color0]
(axis cs:6.68123608543701,5.7)
--(axis cs:6.68123608543701,6.3);

\path [draw=color0]
(axis cs:6.53359321253853,5.7)
--(axis cs:6.53359321253853,6.3);

\path [draw=color0]
(axis cs:6.40131717477078,5.7)
--(axis cs:6.40131717477078,6.3);

\path [draw=color0]
(axis cs:6.2820100312293,5.7)
--(axis cs:6.2820100312293,6.3);

\end{axis}

\end{tikzpicture}\vspace{-1.95pt} 
    \caption{Ratio of the error bounds obtained without, versus with, our modification for different molecules and basis sets, using a time step of $t_\infty$. Errors were evaluated at each cost value $C_{\vec{L}} = (1\ldots 10)L$. Each line represents the ratio of error bounds $\varepsilon_n/\varepsilon_{\vec{L}}$ at some cost $C_{\vec{L}}$. The advantage of our modified algorithm therefore increases left-to-right.}
    \label{fig:error_advantage_basis_sets}
\end{minipage}
\hfill
\begin{minipage}[t]{.48\linewidth}\vspace{0pt}
    \centering
\begin{tikzpicture}

\definecolor{color0}{rgb}{0.250980392156863,0.36078431372549,0.537254901960784}
\definecolor{color1}{rgb}{0.887647058823529,0.904117647058823,0.930588235294118}

\begin{axis}[
height=.25\textheight,
width=.91\linewidth,
tick align=outside,
title style={align=center},
x grid style={white!69.0196078431373!black},
xlabel={\(\displaystyle (C_n - C_{\vec{L}})/L\)},
xmin=-0.1, xmax=2.5,
xtick pos=left,
xtick style={color=black},
y dir=reverse,
y grid style={white!85!black},
y tick label style={font=\small},
ymajorgrids,
ymin=-1, ymax=7,
ytick pos=right,
ytick style={color=black},
ytick={0,1,2,3,4,5,6},
yticklabels={H\(\displaystyle _2\) STO-3G,H\(\displaystyle _2\) cc-pVDZ,H\(\displaystyle _2\) cc-pVTZ,H\(\displaystyle _2\) cc-pVQZ,LiH STO-3G,LiH cc-pVDZ,LiH cc-pVTZ}
]
\draw[draw=none,fill=color1] (axis cs:0,-0.3) rectangle (axis cs:0.0714285714285714,0.3);
\draw[draw=none,fill=color1] (axis cs:0.470169491525424,0.7) rectangle (axis cs:0.98,1.3);
\draw[draw=none,fill=color1] (axis cs:0.520001275998469,1.7) rectangle (axis cs:0.990004678661052,2.3);
\draw[draw=none,fill=color1] (axis cs:0.5,2.7) rectangle (axis cs:0.950000152191095,3.3);
\draw[draw=none,fill=color1] (axis cs:0.44,3.7) rectangle (axis cs:0.94,4.3);
\draw[draw=none,fill=color1] (axis cs:0.650005796952091,4.7) rectangle (axis cs:1.20000772926946,5.3);
\draw[draw=none,fill=color1] (axis cs:0.660000255522946,5.7) rectangle (axis cs:1.19000001825164,6.3);
\path [draw=color0]
(axis cs:0.0714285714285714,-0.3)
--(axis cs:0.0714285714285714,0.3);

\path [draw=color0]
(axis cs:0,-0.3)
--(axis cs:0,0.3);

\path [draw=color0]
(axis cs:0,-0.3)
--(axis cs:0,0.3);

\path [draw=color0]
(axis cs:0,-0.3)
--(axis cs:0,0.3);

\path [draw=color0]
(axis cs:0,-0.3)
--(axis cs:0,0.3);

\path [draw=color0]
(axis cs:0,-0.3)
--(axis cs:0,0.3);

\path [draw=color0]
(axis cs:0,-0.3)
--(axis cs:0,0.3);

\path [draw=color0]
(axis cs:0,-0.3)
--(axis cs:0,0.3);

\path [draw=color0]
(axis cs:0,-0.3)
--(axis cs:0,0.3);

\path [draw=color0]
(axis cs:0,-0.3)
--(axis cs:0,0.3);

\path [draw=color0]
(axis cs:0.850169491525424,0.7)
--(axis cs:0.850169491525424,1.3);

\path [draw=color0]
(axis cs:0.98,0.7)
--(axis cs:0.98,1.3);

\path [draw=color0]
(axis cs:0.86,0.7)
--(axis cs:0.86,1.3);

\path [draw=color0]
(axis cs:0.74,0.7)
--(axis cs:0.74,1.3);

\path [draw=color0]
(axis cs:0.66,0.7)
--(axis cs:0.66,1.3);

\path [draw=color0]
(axis cs:0.6,0.7)
--(axis cs:0.6,1.3);

\path [draw=color0]
(axis cs:0.56,0.7)
--(axis cs:0.56,1.3);

\path [draw=color0]
(axis cs:0.52,0.7)
--(axis cs:0.52,1.3);

\path [draw=color0]
(axis cs:0.490169491525424,0.7)
--(axis cs:0.490169491525424,1.3);

\path [draw=color0]
(axis cs:0.470169491525424,0.7)
--(axis cs:0.470169491525424,1.3);

\path [draw=color0]
(axis cs:0.850001063332057,1.7)
--(axis cs:0.850001063332057,2.3);

\path [draw=color0]
(axis cs:0.990004678661052,1.7)
--(axis cs:0.990004678661052,2.3);

\path [draw=color0]
(axis cs:0.910004891327464,1.7)
--(axis cs:0.910004891327464,2.3);

\path [draw=color0]
(axis cs:0.800003189996172,1.7)
--(axis cs:0.800003189996172,2.3);

\path [draw=color0]
(axis cs:0.720003402662583,1.7)
--(axis cs:0.720003402662583,2.3);

\path [draw=color0]
(axis cs:0.660004891327464,1.7)
--(axis cs:0.660004891327464,2.3);

\path [draw=color0]
(axis cs:0.610001701331292,1.7)
--(axis cs:0.610001701331292,2.3);

\path [draw=color0]
(axis cs:0.580005103993875,1.7)
--(axis cs:0.580005103993875,2.3);

\path [draw=color0]
(axis cs:0.550003189996172,1.7)
--(axis cs:0.550003189996172,2.3);

\path [draw=color0]
(axis cs:0.520001275998469,1.7)
--(axis cs:0.520001275998469,2.3);

\path [draw=color0]
(axis cs:0.790000131898949,2.7)
--(axis cs:0.790000131898949,3.3);

\path [draw=color0]
(axis cs:0.950000152191095,2.7)
--(axis cs:0.950000152191095,3.3);

\path [draw=color0]
(axis cs:0.870000142045022,2.7)
--(axis cs:0.870000142045022,3.3);

\path [draw=color0]
(axis cs:0.770000192775387,2.7)
--(axis cs:0.770000192775387,3.3);

\path [draw=color0]
(axis cs:0.690000182629314,2.7)
--(axis cs:0.690000182629314,3.3);

\path [draw=color0]
(axis cs:0.630000111606803,2.7)
--(axis cs:0.630000111606803,3.3);

\path [draw=color0]
(axis cs:0.590000233359679,2.7)
--(axis cs:0.590000233359679,3.3);

\path [draw=color0]
(axis cs:0.55000010146073,2.7)
--(axis cs:0.55000010146073,3.3);

\path [draw=color0]
(axis cs:0.520000192775387,2.7)
--(axis cs:0.520000192775387,3.3);

\path [draw=color0]
(axis cs:0.5,2.7)
--(axis cs:0.5,3.3);

\path [draw=color0]
(axis cs:0.79,3.7)
--(axis cs:0.79,4.3);

\path [draw=color0]
(axis cs:0.94,3.7)
--(axis cs:0.94,4.3);

\path [draw=color0]
(axis cs:0.82,3.7)
--(axis cs:0.82,4.3);

\path [draw=color0]
(axis cs:0.72,3.7)
--(axis cs:0.72,4.3);

\path [draw=color0]
(axis cs:0.64,3.7)
--(axis cs:0.64,4.3);

\path [draw=color0]
(axis cs:0.58,3.7)
--(axis cs:0.58,4.3);

\path [draw=color0]
(axis cs:0.53,3.7)
--(axis cs:0.53,4.3);

\path [draw=color0]
(axis cs:0.5,3.7)
--(axis cs:0.5,4.3);

\path [draw=color0]
(axis cs:0.46,3.7)
--(axis cs:0.46,4.3);

\path [draw=color0]
(axis cs:0.44,3.7)
--(axis cs:0.44,4.3);

\path [draw=color0]
(axis cs:0.910005539309776,4.7)
--(axis cs:0.910005539309776,5.3);

\path [draw=color0]
(axis cs:1.20000772926946,4.7)
--(axis cs:1.20000772926946,5.3);

\path [draw=color0]
(axis cs:1.13000631223672,4.7)
--(axis cs:1.13000631223672,5.3);

\path [draw=color0]
(axis cs:1,4.7)
--(axis cs:1,5.3);

\path [draw=color0]
(axis cs:0.890012495652286,4.7)
--(axis cs:0.890012495652286,5.3);

\path [draw=color0]
(axis cs:0.820011078619552,4.7)
--(axis cs:0.820011078619552,5.3);

\path [draw=color0]
(axis cs:0.760006183415564,4.7)
--(axis cs:0.760006183415564,5.3);

\path [draw=color0]
(axis cs:0.720007213984825,4.7)
--(axis cs:0.720007213984825,5.3);

\path [draw=color0]
(axis cs:0.680008244554086,4.7)
--(axis cs:0.680008244554086,5.3);

\path [draw=color0]
(axis cs:0.650005796952091,4.7)
--(axis cs:0.650005796952091,5.3);

\path [draw=color0]
(axis cs:0.880000036503278,5.7)
--(axis cs:0.880000036503278,6.3);

\path [draw=color0]
(axis cs:1.19000001825164,5.7)
--(axis cs:1.19000001825164,6.3);

\path [draw=color0]
(axis cs:1.14000010950983,5.7)
--(axis cs:1.14000010950983,6.3);

\path [draw=color0]
(axis cs:1.01000007300656,5.7)
--(axis cs:1.01000007300656,6.3);

\path [draw=color0]
(axis cs:0.910000255522946,5.7)
--(axis cs:0.910000255522946,6.3);

\path [draw=color0]
(axis cs:0.830000127761473,5.7)
--(axis cs:0.830000127761473,6.3);

\path [draw=color0]
(axis cs:0.770000146013112,5.7)
--(axis cs:0.770000146013112,6.3);

\path [draw=color0]
(axis cs:0.730000310277863,5.7)
--(axis cs:0.730000310277863,6.3);

\path [draw=color0]
(axis cs:0.690000018251639,5.7)
--(axis cs:0.690000018251639,6.3);

\path [draw=color0]
(axis cs:0.660000255522946,5.7)
--(axis cs:0.660000255522946,6.3);

\end{axis}

\end{tikzpicture}
    \caption{Difference of the cost of full expansions $C_n=nL$ to order $n=1\ldots10$ and the cost of an iteratively constructed circuit $C_{\vec{L}}$ to arrive at the same error bound, normalized to the cost of one full order $L$, for each molecule and basis set. The advantage of our modified algorithm therefore increases left-to-right. The time step size is $t_\infty$.}
    \label{fig:cost_advantage_basis_sets}
\end{minipage}
\end{figure}

\Cref{fig:error_advantage_basis_sets,fig:cost_advantage_basis_sets} show that for the considered cases, larger basis sets seem to slightly enhance the advantage of our modification. As a special case, for H$_2$ our proposed method yields almost no improvement when using STO-3G, due to the very low number of only 15 terms. Apart from this outlier, the influence of the choice of basis set on our algorithm's performance seems small.

\pagebreak
\section{Proofs} \label{sec:derivations}
For completeness and convenience, we collect several of the results we used in this appendix, including some that may be well known.
\addtolength{\jot}{4pt}

\begin{lemma} \label{lemma:amp}
    The optimal choice for the number of amplification steps is $\nu = 1$, resulting in $s_{\vec{L}} = 2$.
\end{lemma}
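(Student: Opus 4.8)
The plan is to read ``optimal'' as minimising the $T$-gate count required to simulate a fixed time $\tau$ to a fixed total error bound $\epsilon$, and to combine the exact-amplification condition of oblivious amplitude amplification with the resulting cost count. First I would record the standard behaviour of oblivious amplitude amplification applied to $\mathcal{W}$: treating $U_{\vec{L}}$ as unitary (the regime assumed by the downstream lemmas), \cref{eq:W} reads $\mathcal{W}\ket{0}\ket{\psi} = \sin\theta\,\ket{0}U_{\vec{L}}\ket{\psi} + \cos\theta\,\ket{0^\perp,\Phi}$ with $\sin\theta = 1/s_{\vec{L}}$, and $\nu$ applications of the iterate $\mathcal{Q}$ rotate the good amplitude to $\sin\big((2\nu+1)\theta\big)$. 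Deterministic success, $\sin\big((2\nu+1)\theta\big) = 1$, forces $(2\nu+1)\theta = \pi/2$, hence
\begin{equation}
    s_{\vec{L}} = s_*(\nu) \coloneqq \frac{1}{\sin\frac{\pi}{2(2\nu+1)}},
\end{equation}
which is strictly increasing in $\nu$ with $s_*(0) = 1$, $s_*(1) = 2$ and $s_*(2) = 1+\sqrt{5}$. The value $\nu = 0$ is excluded at once: by \cref{eq:s_L}, $s_{\vec{L}}(t)$ is a power series in $t$ with non-negative coefficients and constant term $1$, hence strictly increasing, so $s_{\vec{L}} = 1$ would force $t = 0$ and simulate nothing; thus $\nu \geq 1$.

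Next I would write out the cost of $r$ steps covering time $\tau$ when $\nu$ rounds are used. One step covers a simulated time $t_\nu$ fixed by $s_{\vec{L}}(t_\nu) = s_*(\nu)$, so $\Lambda t_\nu = \ln s_*(\nu)$ up to the sub-leading truncation gap, giving $r \simeq \Lambda\tau/\ln s_*(\nu)$ segments. Each segment runs $\mathcal{Q}^\nu\mathcal{W}$, i.e. $2\nu+1$ invocations of $\mathcal{W}$ together with $2\nu$ cheap reflections, and each $\mathcal{W}$ costs $C_{\vec{L}} \simeq n(\nu)L$, where $n(\nu)$ is the Taylor order at which the per-segment tail $\sum_{k>n}(\Lambda t_\nu)^k/k!$ falls below the admissible per-segment error $\epsilon/r$. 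Inverting $(\Lambda t_\nu)^{n+1}/(n+1)! \lesssim \epsilon/r$ with $\Lambda t_\nu = \ln s_*(\nu)$ shows that the larger error budget that comes with fewer segments is outweighed by the $\log\log$ dependence of the factorial, so $n(\nu)$ is (slowly) non-decreasing in $\ln s_*(\nu)$. Collecting the factors,
\begin{equation}
    C_\epsilon^{(\nu)} \simeq \Lambda\tau\,L\;\frac{(2\nu+1)\,n(\nu)}{\ln s_*(\nu)},
\end{equation}
so the statement reduces to showing that the bracketed figure of merit is minimised at $\nu = 1$ over integers $\nu \geq 1$; the surrounding bookkeeping is exactly that of \cref{lemma:delta_L,lemma:error_repetitions,coroll:epsilon_complexity}.

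The final step is that minimisation. For $\nu \geq 2$ the factor $(2\nu+1)/\ln s_*(\nu)$ is already increasing — $s_*(\nu)$ grows essentially linearly in $2\nu+1$, so its logarithm grows sub-linearly — and $n(\nu)$ is non-decreasing, so $C_\epsilon^{(\nu)}$ is increasing there; only $\nu = 1$ versus $\nu = 2$ needs care. This is the main obstacle: the leading factor is nearly flat, $3/\ln 2 \approx 4.33$ against $5/\ln(1+\sqrt{5}) \approx 4.26$, so the comparison is decided entirely by $n(\nu)$, i.e. by the extra Taylor depth forced by $s_*(2) = 1+\sqrt{5} \approx 3.24$ in place of $s_*(1) = 2$. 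One therefore needs a clean lower bound on $n(2)/n(1)$ from the exponential-tail estimate, using the explicit per-segment budget $\epsilon/r = \epsilon\ln s_*(\nu)/(\Lambda\tau)$, strong enough to give $C_\epsilon^{(2)} > C_\epsilon^{(1)}$. Granted that, $\nu = 1$ — and therefore $s_{\vec{L}} = 2$ — is optimal.
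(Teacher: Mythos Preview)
Your approach is essentially the paper's: both derive the exact-amplification condition $s_{\vec{L}}=1/\sin\!\big(\pi/(4\nu+2)\big)$, translate the step cost and step time into the figure of merit $(2\nu+1)\,n(\nu)/\ln s_*(\nu)$, and observe that this is nearly flat between $\nu=1$ and $\nu=2$ with the tie broken by the truncation depth $n(\nu)$. The only difference is at that last step: the paper simply appeals to an exact numerical evaluation to conclude that the smaller $t$ at $\nu=1$ gives faster convergence in $n$, whereas you propose to extract the same conclusion analytically via a lower bound on $n(2)/n(1)$.
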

\begin{proof}
    For unitary $U_{\vec{L}}$, the operator $\mathcal{Q}^\nu \mathcal{W}$, with $\mathcal{Q}$ as defined in \cref{eq:q}, would have the effect~\cite{berry2014improvement}
    \begin{align*}
        \mathcal{Q}^\nu \mathcal{W} \ket{0}\ket{\psi} = &\sin{[(2\nu+1)\sin^{-1}(s_{\vec{L}}^{-1})]} \ket{0} U_{\vec{L}} \ket{\psi}\\
        + &\cos{[(2\nu+1)\cos^{-1}(N)]} \ket{0^\mathsmaller{\perp}, \Phi}.
    \end{align*}
    For any given number of amplification steps $\nu$, the amplitude of the desired state $\ket{0} U_{\vec{L}} \ket{\psi}$ can be tuned to 1 by setting $t$ such that $s_{\vec{L}}$ fulfills
    \begin{equation}\label{eq:s_n_nu}
        s_{\vec{L}} = \sin{\left(\frac{\pi}{4\nu+2}\right)}^{-1} \sim \frac{4\nu + 2}{\pi}.
    \end{equation}

    For this argument it is sufficient to analyze the full expansion to order $n$. To find the optimal number $\nu$ we consider the operator $\mathcal{Q}^\nu\mathcal{W}$, which contains the most expensive operator $\mathcal{W}$ a total of $2\nu + 1$ times. Therefore the cost is approximately linear in $\nu$, meaning it is also linear in $s_n$. However, we know that
    \begin{equation} \label{eq:s_n_t}
        s_n = \sum_{k=0}^n \frac{t^k}{k!}\, \Bigg(\!\underbrace{\sum_{\ell = 0}^{L-1} \alpha_{\ell}}_{\coloneqq \Lambda}\!\Bigg)^k = \sum_{k=0}^n \frac{t^k \Lambda^k}{k!} \approx e^{\Lambda t},
    \end{equation}
    where the rightmost approximation holds for sufficiently large $n$. \Cref{eq:s_n_nu,eq:s_n_t} imply that the cost is exponential in $t$, indicating there is no benefit in amplifying more than once. Exact numerical evaluation shows that for $n\to\infty$, one or two amplification steps ($\nu \in \{1, 2\}$) yield approximately equivalent time-per-gate, but the smaller $t$ of $\nu=1$ leads to quicker convergence in $n$. Consequently, it is best to choose $\nu=1$. This choice forces $s_{\vec{L}}$ to satisfy
    \begin{equation*}
        s_{\vec{L}} = \sin\left(\frac{\pi}{6}\right)^{-1} = 2.
    \end{equation*}
\end{proof}

\begin{lemma} \label{lemma:action_A}
    The action of $\Pi\mathcal{A}$ on a product state $\ket{0}\ket{\psi}$ is given by~\cite{berry2015taylor}
    \begin{equation*}
        \Pi\mathcal{A} \ket{0}\ket{\psi} = \ket{0} \left(-\frac{4}{s_{\vec{L}}^3} U_{\vec{L}}^{\vphantom{\dagger}} U_{\vec{L}}^\dagger U_{\vec{L}}^{\vphantom{\dagger}} + \frac{3}{s_{\vec{L}}} U_{\vec{L}}\right) \ket{\psi}.
    \end{equation*}
\end{lemma}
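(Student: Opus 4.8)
The plan is to substitute $\mathcal{A}=\mathcal{Q}\mathcal{W}=-\mathcal{W}R\mathcal{W}^{\dagger}R\mathcal{W}$ and push the ancilla projector $\Pi$ through the three copies of $\mathcal{W}$ one at a time, using only that $R=2\Pi-\mathds{1}$ acts as $+\mathds{1}$ on the ancilla-$\ket{0}$ subspace and as $-\mathds{1}$ on its orthogonal complement. First I would record two ``sandwich'' identities. From \cref{eq:W} we already have $\Pi\mathcal{W}\ket{0}\ket{\phi}=\tfrac{1}{s_{\vec{L}}}\ket{0}U_{\vec{L}}\ket{\phi}$ for arbitrary $\ket{\phi}$. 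The dual statement $\Pi\mathcal{W}^{\dagger}\ket{0}\ket{\phi}=\tfrac{1}{s_{\vec{L}}}\ket{0}U_{\vec{L}}^{\dagger}\ket{\phi}$ follows by essentially the same computation that produces \cref{eq:W}: writing $\mathcal{W}^{\dagger}=(\mathcal{P}^{\dagger}\otimes\mathds{1})\,\mathcal{S}^{\dagger}\,(\mathcal{P}\otimes\mathds{1})$, using $\mathcal{S}^{\dagger}\ket{j}\ket{\phi}=\ket{j}V_{j}^{\dagger}\ket{\phi}$ and $\bra{0}\mathcal{P}^{\dagger}=\tfrac{1}{\sqrt{s_{\vec{L}}}}\sum_{j}\sqrt{\beta_{j}}\,\bra{j}$ from \cref{eq:prepare}, together with the fact that the $\beta_{j}$ are real so that $U_{\vec{L}}^{\dagger}=\sum_{j}\beta_{j}V_{j}^{\dagger}$.

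Then I would track the vector step by step. Put $\ket{G}\coloneqq\Pi\mathcal{W}\ket{0}\ket{\psi}=\tfrac{1}{s_{\vec{L}}}\ket{0}U_{\vec{L}}\ket{\psi}$, so $\mathcal{W}\ket{0}\ket{\psi}=\ket{G}+\ket{B}$ with $\Pi\ket{B}=0$; applying $R$ flips the bad part, $R\mathcal{W}\ket{0}\ket{\psi}=\ket{G}-\ket{B}=2\ket{G}-\mathcal{W}\ket{0}\ket{\psi}$. Hitting this with $\mathcal{W}^{\dagger}$ and using $\mathcal{W}^{\dagger}\mathcal{W}=\mathds{1}$ gives $\ket{\chi}\coloneqq\mathcal{W}^{\dagger}R\mathcal{W}\ket{0}\ket{\psi}=2\mathcal{W}^{\dagger}\ket{G}-\ket{0}\ket{\psi}$. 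Crucially, neither $\ket{B}$ nor the full vector $\mathcal{W}^{\dagger}\ket{G}$ is needed: the next $R$ only sees $\Pi\ket{\chi}$, which by the dual identity equals $\tfrac{2}{s_{\vec{L}}^{2}}\ket{0}U_{\vec{L}}^{\dagger}U_{\vec{L}}\ket{\psi}-\ket{0}\ket{\psi}$, so $R\ket{\chi}=2\Pi\ket{\chi}-\ket{\chi}=\tfrac{4}{s_{\vec{L}}^{2}}\ket{0}U_{\vec{L}}^{\dagger}U_{\vec{L}}\ket{\psi}-\ket{0}\ket{\psi}-2\mathcal{W}^{\dagger}\ket{G}$. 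Applying the final $\mathcal{W}$ and projecting with $\Pi$ (using the two sandwich identities for the first two terms, and $\Pi\mathcal{W}\mathcal{W}^{\dagger}\ket{G}=\Pi\ket{G}=\ket{G}$ for the third) yields
\[
\Pi\mathcal{A}\ket{0}\ket{\psi}=-\Bigl(\tfrac{4}{s_{\vec{L}}^{3}}\ket{0}U_{\vec{L}}U_{\vec{L}}^{\dagger}U_{\vec{L}}\ket{\psi}-\tfrac{1}{s_{\vec{L}}}\ket{0}U_{\vec{L}}\ket{\psi}-\tfrac{2}{s_{\vec{L}}}\ket{0}U_{\vec{L}}\ket{\psi}\Bigr),
\]
and the two $U_{\vec{L}}$ terms combine, after distributing the overall sign, into $+\tfrac{3}{s_{\vec{L}}}\ket{0}U_{\vec{L}}\ket{\psi}$, reproducing exactly $\ket{0}\bigl(-\tfrac{4}{s_{\vec{L}}^{3}}U_{\vec{L}}U_{\vec{L}}^{\dagger}U_{\vec{L}}+\tfrac{3}{s_{\vec{L}}}U_{\vec{L}}\bigr)\ket{\psi}$.

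I expect the only delicate point to be the bookkeeping at the two inner steps: one must resist computing the garbage components in full and instead observe that, because $R$ differs from $\pm\mathds{1}$ only by $2\Pi$, only the $\Pi$-projections of the intermediate vectors ever survive, apart from the single term $\mathcal{W}\mathcal{W}^{\dagger}\ket{G}=\ket{G}$ that telescopes. I would also double-check the sign conventions---the minus in $\mathcal{Q}=-\mathcal{W}R\mathcal{W}^{\dagger}R$ and the flips from $R=2\Pi-\mathds{1}$---by confirming that in the unitary limit $U_{\vec{L}}^{\dagger}U_{\vec{L}}=\mathds{1}$, $s_{\vec{L}}=2$ the result collapses to $\ket{0}U_{\vec{L}}\ket{\psi}$, recovering \labelcref{eq:amp}.
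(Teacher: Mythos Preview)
Your proof is correct and uses the same essential ingredients as the paper's: unitarity of $\mathcal{W}$, the reflection $R=2\Pi-\mathds{1}$, and the sandwich identity $\Pi\mathcal{W}\Pi=\tfrac{1}{s_{\vec{L}}}(\ketbra{0}{0}\otimes U_{\vec{L}})$ (together with its adjoint). The only difference is organizational---the paper expands $-\Pi\mathcal{W}(2\Pi-\mathds{1})\mathcal{W}^{\dagger}(2\Pi-\mathds{1})\mathcal{W}$ into four operator terms and collapses them algebraically, whereas you track the state vector through each layer; the computations are equivalent.
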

\begin{proof}
    We can explicitly expand $R$ and use that $\Pi^2=\Pi$ as well as $\Pi \ket{0}\ket{\psi} = \ket{0}\ket{\psi}$ to find
    \begin{align*}
        \Pi\mathcal{A} \ket{0}\ket{\psi} &= -\Pi\mathcal{W}R\mathcal{W}^\dagger R \mathcal{W}\ket{0}\ket{\psi}\\
        &= -\Pi\mathcal{W} (2\Pi - \mathds{1}) \mathcal{W}^\dagger (2\Pi - \mathds{1}) \mathcal{W} \ket{0}\ket{\psi}\\
        &= (-4\Pi\mathcal{W}\Pi\mathcal{W}^\dagger\Pi\mathcal{W} + 2\Pi\mathcal{W}\mathcal{W}^\dagger\Pi\mathcal{W} + 2\Pi\mathcal{W}\Pi\mathcal{W}^\dagger\mathcal{W} - \Pi\mathcal{W}\mathcal{W}^\dagger\mathcal{W}) \ket{0}\ket{\psi}\\
        &= (-4\Pi\mathcal{W}\Pi\mathcal{W}^\dagger\Pi\mathcal{W} + 3 \Pi\mathcal{W})\ket{0}\ket{\psi}\\
        &= (-4\underbrace{\Pi\mathcal{W}\Pi}_{\mathclap{\frac{1}{s_{\vec{L}}}(\ketbra{0}{0}\otimes U_{\vec{L}})}}\Pi\mathcal{W}^\dagger\Pi\Pi\mathcal{W}\Pi + 3 \Pi\mathcal{W}\Pi)\ket{0}\ket{\psi}\\
        &= \ket{0} \left(-\frac{4}{s_{\vec{L}}^3} U_{\vec{L}}^{\vphantom{\dagger}} U_{\vec{L}}^\dagger U_{\vec{L}}^{\vphantom{\dagger}} + \frac{3}{s_{\vec{L}}} U_{\vec{L}}\right) \ket{\psi}
    \end{align*}
    as claimed.
\end{proof}

\pagebreak
\begin{lemma} \label{lemma:p_star}
    If used in $\mathcal{W}\ket{0}\ket{\psi}$, $\mathcal{P}^\star$ has the same effect as $\mathcal{P}$, i.e. $\mathcal{W}\ket{0}\ket{\psi} = (\mathcal{P}^\dagger \otimes \mathds{1})\,\mathcal{S}\,(\mathcal{P} \otimes \mathds{1})\ket{0}\ket{\psi} = ({\mathcal{P}^\star}^\dagger \otimes \mathds{1})\,\mathcal{S}\,(\mathcal{P}^\star \otimes \mathds{1})\ket{0}\ket{\psi}$
\end{lemma}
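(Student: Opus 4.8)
The plan is to exploit the fact that, although $\mathcal{P}$ and $\mathcal{P}^\star$ treat the ancilla differently, they differ only on registers that $\mathcal{S}$ never touches, so whatever extra ``garbage'' $\mathcal{P}^\star$ writes is subsequently removed by $(\mathcal{P}^\star)^\dagger$. Concretely, I would first decompose both preparation operators through a common intermediate step. Using the identity $\sum_{\ell_1,\dots,\ell_k}\alpha_{\ell_1}\cdots\alpha_{\ell_k}=\prod_{j=1}^{k}\Lambda_j$, the exact operator of \cref{eq:prepare} can be written as $\mathcal{P}=C\,\mathcal{P}_q$, where $\mathcal{P}_q$ acts only on the $q$ register with $\ket{0}_q\mapsto s_{\vec{L}}^{-1/2}\sum_{k=0}^{\kappa}\sqrt{\tfrac{t^k}{k!}\prod_{j=1}^{k}\Lambda_j}\,\ket{k}_q$ — which is exactly the $q$-register action ascribed to $\mathcal{P}^\star$ — and $C=\sum_k\ketbra{k}{k}_q\otimes\bigl(\bigotimes_{m=1}^{k}D_m\bigr)$ is the $q$-controlled preparation of the registers $c_1,\dots,c_k$ with $D_m\ket{0}_{c_m}=\Lambda_m^{-1/2}\sum_\ell\sqrt{\alpha_\ell}\ket{\ell}_{c_m}$, leaving $c_{k+1},\dots,c_\kappa$ in $\ket{0}$. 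The modified operator shares the same leading factor, $\mathcal{P}^\star=D\,\mathcal{P}_q$ with $D=\bigotimes_{m=1}^{\kappa}D_m$; writing $D=\sum_k\ketbra{k}{k}_q\otimes\bigl(\bigotimes_{m=1}^{k}D_m\bigr)E_k$ where $E_k=\mathds{1}_{c_1\cdots c_k}\otimes\bigotimes_{m=k+1}^{\kappa}D_m$ acts only on the ``inactive'' registers, the whole comparison reduces to the operator identity $D^\dagger\mathcal{S}D=C^\dagger\mathcal{S}C$.

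Next I would invoke the block structure of $\mathcal{S}$. Because $q$ is in unary code and each $\tilde h_{\ell_m}$ in group $m$ carries an extra control on the $m$-th qubit of $q$, on the sector $q=\ket{k}$ only groups $1,\dots,k$ fire: $\mathcal{S}$ acts on $\ket{\psi}$ as $\tilde h_{\ell_1}\cdots\tilde h_{\ell_k}$ addressed by $c_1,\dots,c_k$ and as the identity on $c_{k+1},\dots,c_\kappa$. Hence $\mathcal{S}=\sum_k\ketbra{k}{k}_q\otimes\mathcal{S}_k$ with $\mathcal{S}_k$ supported only on $c_1\cdots c_k$ and the target. Combining this with the decompositions of $C$ and $D$ and using $\ketbra{k}{k}_q\ketbra{k'}{k'}_q=\delta_{kk'}\ketbra{k}{k}_q$ gives, writing $C_k=\bigotimes_{m=1}^{k}D_m$, that $C^\dagger\mathcal{S}C=\sum_k\ketbra{k}{k}_q\otimes(C_k^\dagger\mathcal{S}_kC_k)$ while $D^\dagger\mathcal{S}D=\sum_k\ketbra{k}{k}_q\otimes(E_k^\dagger C_k^\dagger\mathcal{S}_kC_kE_k)$. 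Since $E_k$ is supported on registers disjoint from the support of $C_k^\dagger\mathcal{S}_kC_k$, the two commute, so $E_k^\dagger C_k^\dagger\mathcal{S}_kC_kE_k=C_k^\dagger\mathcal{S}_kC_k\,E_k^\dagger E_k=C_k^\dagger\mathcal{S}_kC_k$; thus $D^\dagger\mathcal{S}D=C^\dagger\mathcal{S}C$ term by term, whence $(\mathcal{P}^\star)^\dagger\mathcal{S}\mathcal{P}^\star=\mathcal{P}_q^\dagger D^\dagger\mathcal{S}D\,\mathcal{P}_q=\mathcal{P}_q^\dagger C^\dagger\mathcal{S}C\,\mathcal{P}_q=\mathcal{P}^\dagger\mathcal{S}\mathcal{P}$ as operators (the $\otimes\mathds{1}$ on the system being understood), which in particular gives the claimed equality on $\ket{0}\ket{\psi}$.

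I expect the main obstacle to be bookkeeping rather than anything deep: one must be careful that (i) $\mathcal{P}$ genuinely factors as $C\mathcal{P}_q$ with the \emph{same} $q$-register preparation that defines $\mathcal{P}^\star$ — this is exactly where $\sum_{\ell_1,\dots,\ell_k}\prod_j\alpha_{\ell_j}=\prod_j\Lambda_j$ does the real work — and (ii) that on each $q=\ket{k}$ sector the support of $\mathcal{S}_k$, and hence of $C_k^\dagger\mathcal{S}_kC_k$, is genuinely confined to $c_1,\dots,c_k$ and the target, so the amplitudes $D$ places into $c_{k+1},\dots,c_\kappa$ never enter $\mathcal{S}$. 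Once those points are pinned down the cancellation $E_k^\dagger(\cdot)E_k=(\cdot)$ is immediate. I would also note that, since the argument yields an operator identity rather than merely an identity on $\ket{0}\ket{\psi}$, $\mathcal{P}^\star$ may equally well be substituted for $\mathcal{P}$ inside $\mathcal{Q}=-\mathcal{W}R\mathcal{W}^\dagger R$, so the whole of $\mathcal{A}$ is unaffected.
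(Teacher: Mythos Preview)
Your proposal is correct and takes a genuinely different route from the paper. The paper proceeds by brute-force amplitude tracking: it applies $\mathcal{P}^\star$ to $\ket{0}$, then $\mathcal{S}$, then $(\mathcal{P}^\star)^\dagger$, projects onto the ancilla $\ket{0}$, and verifies that the resulting coefficient of $\ket{0}\ket{\psi}$ reproduces $\tfrac{1}{s_{\vec L}}U_{\vec L}$. In particular, the paper only establishes $\Pi(\mathcal{P}^{\star\dagger}\!\otimes\!\mathds{1})\mathcal{S}(\mathcal{P}^{\star}\!\otimes\!\mathds{1})\ket{0}\ket{\psi}=\Pi\mathcal{W}\ket{0}\ket{\psi}$, which is all that is actually consumed later via $\Pi\mathcal{W}\Pi$. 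Your argument is more structural: factoring $\mathcal{P}=C\mathcal{P}_q$ and $\mathcal{P}^\star=D\mathcal{P}_q$ through the common $q$-preparation, using the block form $\mathcal{S}=\sum_k\ketbra{k}{k}_q\otimes\mathcal{S}_k$, and invoking the disjoint-support commutation $[E_k,\,C_k^\dagger\mathcal{S}_kC_k]=0$ together with $E_k^\dagger E_k=\mathds{1}$ gives the operator identity $(\mathcal{P}^\star)^\dagger\mathcal{S}\mathcal{P}^\star=\mathcal{P}^\dagger\mathcal{S}\mathcal{P}$ outright. This is strictly stronger than what the paper proves and, as you note, immediately justifies using $\mathcal{P}^\star$ inside $\mathcal{Q}$ as well; the paper's direct computation has the compensating virtue of making the identification $N_q=s_{\vec L}$ explicit along the way.
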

\begin{proof}
    $\mathcal{P}^\star$ on any of the $c_k$ registers has the action
    \begin{equation*}
        \mathcal{P}^\star \ket{0}_{\!c_k} = \frac{1}{\sqrt{\Lambda_k}} \sum_{\ell=0}^{L_k - 1} \sqrt{\alpha_\ell}\ket{\ell}_{\!c_k}
    \end{equation*}
    and on the $q$ register
    \begin{equation*}
        \mathcal{P}^\star \ket{0}_{\!q} = \frac{1}{\sqrt{N_q}} \sum_{k=0}^{\kappa} \sqrt{\frac{t^k}{k!}\prod_{j=1}^k \Lambda_j} \ket{k}_{\!q}
    \end{equation*}
    where $N_q = \sum_{k=0}^{\infty}\frac{t^k}{k!}\prod_{j=1}^k \Lambda_j$ and $\kappa$ is the largest nonzero index in $\vec{L}$. Therefore
    \begin{equation*}
        \mathcal{S}\,(\mathcal{P}^\star \otimes \mathds{1})\ket{0}\ket{\psi} = \frac{1}{\sqrt{N_q \prod_{k=1}^\kappa \Lambda_k}} \sum_{k=0}^\kappa \sqrt{\left(\prod_{j=1}^k \Lambda_j \right) \frac{t^k}{k!}}\ket{k}_{\!q} \bigotimes_{j=1}^k \left(\sum_{\ell=0}^{L_j - 1}\ket{\ell}_{\!c_j} \sqrt{\alpha_\ell} \tilde{h}_\ell \right) \bigotimes_{j=k+1}^\kappa \left(\sum_{\ell=0}^{L_j - 1}\ket{\ell}_{\!c_j} \sqrt{\alpha_\ell}\right) \ket{\psi}
    \end{equation*}
    Transforming back with ${\mathcal{P}^\star}^\dagger$ and projecting onto the ancilla $\ket{0}$ yields
    \begin{align*}
        \Pi ({\mathcal{P}^\star}^\dagger \otimes \mathds{1})\,\mathcal{S}\,(\mathcal{P}^\star \otimes \mathds{1}) &= \frac{\Pi}{N_q \prod_{k=1}^\kappa \Lambda_k} \sum_{k=0}^\kappa \left(\prod_{j=1}^k \Lambda_j \right) \frac{t^k}{k!}\ket{0}_{\!q} \bigotimes_{j=1}^k \left(\sum_{\ell=0}^{L_j - 1}\ket{0}_{\!c_j} \alpha_\ell \tilde{h}_\ell \right) \bigotimes_{j=k+1}^\kappa \underbrace{\left(\sum_{\ell=0}^{L_j - 1}\ket{0}_{\!c_j} \alpha_\ell\right)}_{\Lambda_j \ket{0}_{c_j}} \ket{\psi}\\
        &= \ket{0} \frac{1}{N_q \prod_{k=1}^\kappa \Lambda_k} \sum_{k=0}^\kappa \left(\prod_{j=1}^\kappa \Lambda_j \right) \frac{t^k}{k!} \prod_{j=1}^k \left(\sum_{\ell=0}^{L_j - 1} \alpha_\ell \tilde{h}_\ell \right) \ket{\psi}\\
        &= \frac{1}{N_q} \ket{0} \sum_{k=0}^\kappa \frac{t^k}{k!} \prod_{j=1}^k \left(\sum_{\ell=0}^{L_j - 1} \alpha_\ell \tilde{h}_\ell \right) \ket{\psi} = \frac{1}{N_q} \ket{0} U_{\vec{L}} \ket{\psi} = \Pi \mathcal{W}\ket{0}\ket{\psi}
    \end{align*}
    which is \cref{eq:W} and we see that $N_q = s_{\vec{L}}$ as defined in \cref{eq:s_L}.
\end{proof}

\begin{lemma} \label{lemma:delta_L}
    The error of a single time step $\delta_{\vec{L}}$ when using $t_\infty = \log(2)/\Lambda$ can be bounded by
    \begin{equation*}
        \delta_{\vec{L}}(t_\infty) \leq 2 - s_{\vec{L}}(t_\infty) \eqqcolon \varepsilon_{\vec{L}}.
    \end{equation*}
\end{lemma}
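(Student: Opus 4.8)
The plan is to bound $\delta_{\vec{L}}(t_\infty)=\|U(t_\infty)-\tilde{\mathcal{A}}_{\vec{L}}(t_\infty)\|$ by splitting it, via the triangle inequality, into a \emph{truncation} part $\|U(t_\infty)-U_{\vec{L}}(t_\infty)\|$ and an \emph{amplification} part $\|U_{\vec{L}}(t_\infty)-\tilde{\mathcal{A}}_{\vec{L}}(t_\infty)\|$. I will show the first part is already bounded by $\varepsilon_{\vec{L}}$, and that the second part only contributes at order $\varepsilon_{\vec{L}}^2$, so that to first order in $\varepsilon_{\vec{L}}$ the total is $\varepsilon_{\vec{L}}$.

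For the truncation part I would compare the Taylor series of \cref{eq:U_L} with that of $U(t)=\mathds{1}+\sum_{k\ge1}\frac{(-it)^k}{k!}\prod_{j=1}^k(\sum_{\ell=0}^{L-1}\alpha_\ell h_\ell)$. Every summand of $U_{\vec{L}}(t)$ is a summand of $U(t)$ carrying the identical positive coefficient $\frac{t^k}{k!}\alpha_{\ell_1}\cdots\alpha_{\ell_k}$; the ones absent from order $k$ are exactly the tuples $(\ell_1,\dots,\ell_k)$ violating $\ell_j<L_j$ for some $j$. Since the common summands cancel in the difference and $\|h_\ell\|=1$, the triangle inequality gives $\|U(t)-U_{\vec{L}}(t)\|\le\sum_{k\ge1}\frac{t^k}{k!}\big(\Lambda^k-\prod_{j=1}^k\Lambda_j\big)=e^{\Lambda t}-s_{\vec{L}}(t)$, with $\Lambda_j$, $s_{\vec{L}}$ as in \cref{eq:s_L}. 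Setting $t=t_\infty=\log(2)/\Lambda$ makes $e^{\Lambda t_\infty}=2$, hence $\|U(t_\infty)-U_{\vec{L}}(t_\infty)\|\le 2-s_{\vec{L}}(t_\infty)=\varepsilon_{\vec{L}}$.

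For the amplification part, write $s\coloneqq s_{\vec{L}}(t_\infty)=2-\varepsilon_{\vec{L}}$ and $D\coloneqq U_{\vec{L}}(t_\infty)-U(t_\infty)$, so $\|D\|\le\varepsilon_{\vec{L}}$ by the step above and $U$ is unitary. Substituting $U_{\vec{L}}=U+D$ into $\tilde{\mathcal{A}}_{\vec{L}}=\frac{3}{s}U_{\vec{L}}-\frac{4}{s^3}U_{\vec{L}}U_{\vec{L}}^\dagger U_{\vec{L}}$ from \cref{eq:a_tilde} and discarding all terms with two or more factors of $D$ (each $\mathcal{O}(\varepsilon_{\vec{L}}^2)$), I would arrive at
\[
\tilde{\mathcal{A}}_{\vec{L}}=\Big(\tfrac{3}{s}-\tfrac{4}{s^3}\Big)U+\Big(\tfrac{3}{s}-\tfrac{8}{s^3}\Big)D-\tfrac{4}{s^3}\,UD^\dagger U+\mathcal{O}(\varepsilon_{\vec{L}}^2).
\]
The crucial point is that $f(s)=\tfrac{3}{s}-\tfrac{4}{s^3}$ has $f(2)=1$ and $f'(2)=0$, so $f(s)=1+\mathcal{O}(\varepsilon_{\vec{L}}^2)$ — this is the first-order robustness of the single oblivious amplification step. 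Replacing the remaining two coefficients by their values $\tfrac12$ at $s=2$ (the error being $\mathcal{O}(\varepsilon_{\vec{L}})\cdot\|D\|=\mathcal{O}(\varepsilon_{\vec{L}}^2)$) leaves $\tilde{\mathcal{A}}_{\vec{L}}-U=\tfrac12 D-\tfrac12 UD^\dagger U+\mathcal{O}(\varepsilon_{\vec{L}}^2)$, and since $\|UD^\dagger U\|=\|D\|$, one gets $\|\tfrac12 D-\tfrac12 UD^\dagger U\|\le\|D\|\le\varepsilon_{\vec{L}}$. Combining, $\delta_{\vec{L}}(t_\infty)\le\varepsilon_{\vec{L}}$ up to order $\varepsilon_{\vec{L}}$.

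The main obstacle is exactly this cancellation: a naive bound that throws away all $D$-terms and uses $\|U_{\vec{L}}U_{\vec{L}}^\dagger U_{\vec{L}}-U\|\le 3\|D\|$ would only yield $\delta_{\vec{L}}\lesssim 3\varepsilon_{\vec{L}}$, so one must keep the linear-in-$D$ contributions, identify the coefficient combinations $\tfrac{3}{s}-\tfrac{8}{s^3}$ and $\tfrac{4}{s^3}$ at $s=2$, use $f'(2)=0$, and note that $\tfrac12(D-UD^\dagger U)$ has norm no larger than $\|D\|$. A secondary care point is the bookkeeping of ``up to order $\varepsilon_{\vec{L}}$'': one should verify that every discarded piece — including the deviation of $s$ from $2$ in the two non-robust coefficients and every product containing at least two $D$'s — is genuinely $\mathcal{O}(\varepsilon_{\vec{L}}^2)$.
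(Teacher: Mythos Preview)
Your actual computation is correct and is essentially the paper's proof: bound $\|D\|=\|U_{\vec{L}}-U\|\le\varepsilon_{\vec{L}}$ by comparing the series term by term, then expand $\tilde{\mathcal{A}}_{\vec{L}}-U$ to first order in $D$ and in $s-2$, obtaining $\tfrac12 D-\tfrac12 UD^\dagger U+\mathcal{O}(\varepsilon_{\vec{L}}^2)$, whose norm is at most $\|D\|\le\varepsilon_{\vec{L}}$. Your observation that $f(s)=\tfrac{3}{s}-\tfrac{4}{s^3}$ has $f(2)=1$ and $f'(2)=0$ is exactly the mechanism the paper uses (implicitly, via the coefficient expansions $\tfrac{3}{2-\varepsilon}=\tfrac32+\tfrac{3\varepsilon}{4}+\cdots$ and $\tfrac{4}{(2-\varepsilon)^3}=\tfrac12+\tfrac{3\varepsilon}{4}+\cdots$).

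One caveat on presentation: your opening plan does not match what you then do. The triangle-inequality split $\|U-\tilde{\mathcal{A}}_{\vec{L}}\|\le\|U-U_{\vec{L}}\|+\|U_{\vec{L}}-\tilde{\mathcal{A}}_{\vec{L}}\|$ would \emph{not} give $\varepsilon_{\vec{L}}$, because the ``amplification part'' $\|U_{\vec{L}}-\tilde{\mathcal{A}}_{\vec{L}}\|$ is $\mathcal{O}(\varepsilon_{\vec{L}})$, not $\mathcal{O}(\varepsilon_{\vec{L}}^2)$: at $s=2$ one has $U_{\vec{L}}-\tilde{\mathcal{A}}_{\vec{L}}=\tfrac12 U_{\vec{L}}(U_{\vec{L}}^\dagger U_{\vec{L}}-\mathds{1})+\mathcal{O}(\varepsilon_{\vec{L}}^2)$ and $\|U_{\vec{L}}^\dagger U_{\vec{L}}-\mathds{1}\|=\mathcal{O}(\|D\|)$. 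Fortunately your third paragraph never uses that split; it bounds $\|\tilde{\mathcal{A}}_{\vec{L}}-U\|$ directly, which is the right move and is what the paper does. Just drop the triangle-inequality framing from the plan.
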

\begin{proof}
    For easier notation, we first consider fully expanded orders and define
    \begin{align*}
        \tilde{\Delta}_n(t) &\coloneqq U_n(t) - U(t)\\
        \varepsilon_n &\coloneqq s_\infty(t_\infty) - s_n(t_\infty) = 2 - s_n(t_\infty)
    \end{align*}
    and observe that
    \begin{align*}
        ||\tilde{\Delta}_n(t)|| &= \left|\left| \sum_{k=1}^n \frac{(-it)^k}{k!}\sum_{\ell_1, \ldots, \ell_k=0}^{L-1} \alpha_{\ell_1} \ldots \alpha_{\ell_k} h_{\ell_1} \ldots h_{\ell_k} - \sum_{k=1}^{\infty} \frac{(-it)^k}{k!} \sum_{\ell_1, \ldots, \ell_k}^L \alpha_{\ell_1} \ldots \alpha_{\ell_k} h_{\ell_1} \ldots h_{\ell_k} \right|\right|\\
        &= \left|\left| \sum_{k=n+1}^{\infty} \frac{(-it)^k}{k!} \sum_{\ell_1, \ldots, \ell_k=0}^{L-1} \alpha_{\ell_1} \ldots \alpha_{\ell_k} h_{\ell_1} \ldots h_{\ell_k} \right|\right|\\
        &\leq \sum_{k=n+1}^{\infty} \frac{t^k}{k!} \sum_{\ell_1, \ldots, \ell_k = 0}^{L-1} \alpha_{\ell_1} \ldots \alpha_{\ell_k} \underbrace{||h_{\ell_1}|| \ldots ||h_{\ell_k}||}_1\\
        &= \left(1 + \sum_{k=1}^{\infty} \frac{t^k}{k!} \sum_{\ell_1, \ldots, \ell_k = 0}^{L-1} \alpha_{\ell_1} \ldots \alpha_{\ell_k} \right) - \left(1 + \sum_{k=1}^{n} \frac{t^k}{k!}\sum_{\ell_1, \ldots, \ell_k = 0}^{L-1} \alpha_{\ell_1} \ldots \alpha_{\ell_k} \right)\\
        &= s_\infty(t) - s_n(t)
    \end{align*}
    which means
    \begin{equation*}
        ||\tilde{\Delta}_n(t_\infty)|| \leq s_\infty(t_\infty) - s_n(t_\infty) = \varepsilon_n.
    \end{equation*}
    Using these in our definition of $\Delta_n$ yields
    \begin{align*}
        -\Delta_{n}(t_\infty) &= \tilde{\mathcal{A}}_n(t_\infty) - U(t_\infty) \\
        &= \frac{3}{s_n(t_\infty)} U_n(t_\infty) - \frac{4}{s_n^3(t_\infty)} U_n^{\phantom{\dagger}}(t_\infty) U_n^\dagger(t_\infty) U_n^{\phantom{\dagger}}(t_\infty) - U(t_\infty)\\
        &= \underbrace{\frac{3}{2 - \varepsilon_n}}_{\mathclap{=\frac{3}{2} + \frac{3\varepsilon_n}{4} + \mathcal{O}(\varepsilon_n^2)}} (U + \tilde{\Delta}_n) - \underbrace{\frac{4}{(2-\varepsilon_n)^3}}_{\mathclap{=\frac{1}{2} + \frac{3\varepsilon_n}{4} + \mathcal{O}(\varepsilon_n^2)}} \overbrace{(U + \tilde{\Delta}_n)(U + \tilde{\Delta}_n)^\dagger(U + \tilde{\Delta}_n)}^{=U + 2\tilde{\Delta}_n + U \tilde{\Delta}_n^\dagger U + \mathcal{O}(\tilde{\Delta}_n^2)}{} - U\\
        &= \tilde{\Delta}_n \left(\frac{1}{2} - \frac{3\varepsilon_n}{4}\right) - U\tilde{\Delta}_n^\dagger U \left(\frac{1}{2} + \frac{3\varepsilon_n}{4}\right) + \mathcal{O}(\tilde{\Delta}_n^2) + \mathcal{O}(\varepsilon_n^2).
    \end{align*}
    Now we can finally bound the error to
    \begin{align*}
        \delta_n = ||\Delta_n(t_\infty)|| &\leq \left|\left|\tilde{\Delta}_n\left(\frac{1}{2} - \frac{3\varepsilon_n}{4}\right)\right|\right| + \left|\left|U\tilde{\Delta}_n^\dagger U \left(\frac{1}{2} + \frac{3\varepsilon_n}{4}\right)\right|\right| + \mathcal{O}(\varepsilon_n^2)\\
        &\leq \frac{\varepsilon_n}{2} + \frac{\varepsilon_n}{2} + \mathcal{O}(\varepsilon_n^2) = \varepsilon_n + \mathcal{O}(\varepsilon_n^2),
    \end{align*}
    which straightforwardly extends to $\delta_{\vec{L}}$ with $\varepsilon_{\vec{L}}$ for partial orders.
\end{proof}

\begin{lemma} \label{lemma:error_repetitions}
    The error of $r$ time steps $||U^r - \tilde{\mathcal{A}}_{\vec{L}}^r||$ is bounded by $r$ times the error of a single time step $\delta_{\vec{L}} = ||\Delta_{\vec{L}}|| = ||U - \tilde{\mathcal{A}}_{\vec{L}}||$, up to order $\delta_{\vec{L}}$.
\end{lemma}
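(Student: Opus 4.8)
The plan is to use a standard hybrid (telescoping) argument together with the near-unitarity of $\tilde{\mathcal{A}}_{\vec{L}}$. First I would record the two elementary facts that drive the estimate. Since $U = U(t_\infty)$ is unitary we have $\|U\| = 1$, hence $\|U^{\,j}\| = 1$ for every $j$. And since $\tilde{\mathcal{A}}_{\vec{L}} = U - \Delta_{\vec{L}}$ by the definition of $\Delta_{\vec{L}}$, the triangle inequality gives $\|\tilde{\mathcal{A}}_{\vec{L}}\| \le \|U\| + \|\Delta_{\vec{L}}\| = 1 + \delta_{\vec{L}}$, so that $\|\tilde{\mathcal{A}}_{\vec{L}}^{\,k}\| \le (1+\delta_{\vec{L}})^k$ by submultiplicativity of the operator norm.

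Next I would write the telescoping identity
\begin{equation*}
U^r - \tilde{\mathcal{A}}_{\vec{L}}^{\,r} = \sum_{k=0}^{r-1} U^{\,r-1-k}\,(U - \tilde{\mathcal{A}}_{\vec{L}})\,\tilde{\mathcal{A}}_{\vec{L}}^{\,k},
\end{equation*}
which is verified by expanding the sum and cancelling consecutive terms. Applying the triangle inequality, submultiplicativity, and the two bounds above yields
\begin{equation*}
\|U^r - \tilde{\mathcal{A}}_{\vec{L}}^{\,r}\| \le \sum_{k=0}^{r-1} \|U\|^{\,r-1-k}\,\|\Delta_{\vec{L}}\|\,\|\tilde{\mathcal{A}}_{\vec{L}}\|^{\,k} \le \delta_{\vec{L}} \sum_{k=0}^{r-1} (1+\delta_{\vec{L}})^k \le r\,\delta_{\vec{L}}\,(1+\delta_{\vec{L}})^{r-1}.
\end{equation*}

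Finally I would expand the correction factor: using $(1+\delta_{\vec{L}})^{r-1} \le e^{(r-1)\delta_{\vec{L}}} = 1 + O(r\delta_{\vec{L}})$, the bound becomes $r\delta_{\vec{L}} + O((r\delta_{\vec{L}})^2)$, i.e. $r\delta_{\vec{L}}$ up to higher-order terms, which is the claimed statement; combined with $r\varepsilon_{\vec{L}} \ge r\delta_{\vec{L}}$ from \cref{lemma:delta_L} this reproduces the displayed inequality $\le r\varepsilon_{\vec{L}}$ used in the main text. The only genuine subtlety — and the ``main obstacle'' in the loose sense — is the bookkeeping behind the phrase ``up to order $\delta_{\vec{L}}$'': the neglected term is really $O(r^2\delta_{\vec{L}}^2)$ rather than $O(\delta_{\vec{L}}^2)$, so the statement is only meaningful in the regime $r\delta_{\vec{L}} \ll 1$. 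Since the total error bound $\epsilon = r\varepsilon_{\vec{L}}$ is precisely the small quantity of interest, I would make this hypothesis explicit. No other difficulty is anticipated, as everything else is just the triangle inequality and submultiplicativity of the operator norm.
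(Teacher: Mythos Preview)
Your argument is correct and close in spirit to the paper's, but the execution differs in a useful way. The paper substitutes $\tilde{\mathcal{A}}_{\vec{L}} = U - \Delta$ and expands $(U-\Delta)^r$ binomially, keeping only the terms linear in $\Delta$ (which are $U^{k-1}\Delta U^{r-k}$) and bundling everything else into an $\mathcal{O}(\Delta^2)$ symbol; this is essentially your telescoping sum with each $\tilde{\mathcal{A}}_{\vec{L}}^{\,k}$ replaced by $U^k$ at leading order. Your version instead keeps the exact hybrid identity and controls $\|\tilde{\mathcal{A}}_{\vec{L}}^{\,k}\|$ via $(1+\delta_{\vec{L}})^k$. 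The payoff of your route is precisely the point you flag at the end: the correction is visibly $O\bigl((r\delta_{\vec{L}})^2\bigr)$, whereas the paper's $\mathcal{O}(\delta^2)$ hides the combinatorial $r$-dependence. Both arrive at $r\delta_{\vec{L}}$ to leading order, so either proof suffices for the lemma as stated; yours is simply more explicit about the regime of validity $r\delta_{\vec{L}}\ll 1$.
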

\begin{proof}
    We use the definition of $\Delta = U - \tilde{\mathcal{A}}$ and substitute for $\tilde{\mathcal{A}}$.
    \begin{align*}
        ||U^r - \tilde{\mathcal{A}}_{\vec{L}}^r|| &= ||U^r - (U - \Delta)^r||
        = ||U^r - U^r + \sum_{k=1}^{r} U^{k-1}\, \Delta\, U^{r-k} + \mathcal{O}(\Delta^2)|| \\
        &\leq \sum_{k=1}^{r} ||U^{k-1}\, \Delta\, U^{r-k}|| + ||\mathcal{O}(\Delta^2)||
        \leq \sum_{k=1}^{r} ||\Delta|| + \mathcal{O}(\delta^2) = r\delta + \mathcal{O}(\delta^2)
    \end{align*}
\end{proof}

\begin{lemma} \label{lemma:error_bound}
    The logarithmic inverse error bound of a single time step for full orders $\log(\varepsilon_n^{-1})$ scales like
    \begin{equation*}
        \log\frac{1}{\varepsilon_n} = \mathcal{O}(n \log n)
    \end{equation*}
\end{lemma}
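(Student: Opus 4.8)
The plan is to exploit the fact, established in \cref{lemma:amp} (and visible in \cref{eq:s_n_t}), that the chosen step size $t_\infty = \log(2)/\Lambda$ makes $\Lambda t_\infty = \log 2$, so that $s_n(t_\infty)$ is precisely a truncation of the exponential series evaluated at $\log 2$. First I would write
\begin{equation*}
  s_n(t_\infty) = \sum_{k=0}^n \frac{(\Lambda t_\infty)^k}{k!} = \sum_{k=0}^n \frac{(\log 2)^k}{k!},
\end{equation*}
and since $s_\infty(t_\infty) = e^{\log 2} = 2$ this yields the clean closed form
\begin{equation*}
  \varepsilon_n = 2 - s_n(t_\infty) = \sum_{k=n+1}^\infty \frac{(\log 2)^k}{k!}.
\end{equation*}

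Next, because every term of this tail is positive, I would simply retain the leading one to get the lower bound $\varepsilon_n \geq (\log 2)^{n+1}/(n+1)!$. Taking logarithms,
\begin{equation*}
  \log\frac{1}{\varepsilon_n} \leq \log\bigl((n+1)!\bigr) + (n+1)\log\frac{1}{\log 2}.
\end{equation*}
The second term is $\mathcal{O}(n)$, since $\log(1/\log 2)$ is a fixed positive constant. For the first term I would invoke Stirling's approximation, $\log((n+1)!) = (n+1)\log(n+1) - (n+1) + \mathcal{O}(\log n) = \mathcal{O}(n\log n)$, which dominates. Combining the two estimates gives $\log(1/\varepsilon_n) = \mathcal{O}(n\log n)$, as claimed.

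There is essentially no hard step here: the only point to spot is that the specific choice $t = t_\infty$ collapses the normalization $s_n$ onto the partial sums of $e^{\log 2}$, after which the result follows from a one-line tail estimate plus Stirling; note only the lower bound on $\varepsilon_n$ is needed for an $\mathcal{O}$ statement. If one additionally wanted the matching lower bound $\log(1/\varepsilon_n) = \Omega(n\log n)$ -- not required for the stated claim -- I would bound the tail from above by comparison with a geometric series, $\sum_{k\geq n+1}(\log 2)^k/k! \leq \bigl((\log 2)^{n+1}/(n+1)!\bigr)(1-\tfrac{\log 2}{n+2})^{-1}$, obtaining $\varepsilon_n = \Theta\bigl((\log 2)^{n+1}/(n+1)!\bigr)$ and hence $\log(1/\varepsilon_n) = \Theta(n\log n)$.
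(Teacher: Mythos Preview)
Your argument is correct and structurally parallel to the paper's: both recognise $\varepsilon_n = \sum_{k>n}(\log 2)^k/k!$ as the tail of $e^{\log 2}$ and finish with Stirling. The one substantive difference is the \emph{direction} of the estimate. You read $\mathcal{O}$ literally and therefore bound $\varepsilon_n$ from below by its leading term, obtaining an upper bound on $\log(1/\varepsilon_n)$. The paper instead bounds $\varepsilon_n$ from above, via
\[
\varepsilon_n \;=\; \frac{(\log 2)^n}{n!}\sum_{k\geq 1}\frac{n!\,(\log 2)^k}{(n+k)!}\;\leq\;\frac{(\log 2)^n}{n!}\,e^{\log 2}\;=\;\frac{2(\log 2)^n}{n!},
\]
which after Stirling gives a \emph{lower} bound on $\log(1/\varepsilon_n)$ --- formally the $\Omega$ direction. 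That is the direction actually consumed downstream in \cref{coroll:epsilon_complexity}, where one must know how large $n$ has to be to drive $\varepsilon_n$ below a target; the paper is using $\mathcal{O}$ in the informal ``scales like'' sense. Your closing geometric-series remark, upgrading the estimate to $\varepsilon_n = \Theta\bigl((\log 2)^{n+1}/(n+1)!\bigr)$ and hence $\log(1/\varepsilon_n)=\Theta(n\log n)$, is therefore not an optional aside but precisely the half the paper relies on.
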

\begin{proof}
    We can bound the residual of the Taylor series by
    \begin{align*}
        \varepsilon_n &= \sum_{k=n+1}^\infty \frac{(\overbrace{t_\infty \Lambda}^{\log 2})^k}{k!}\\
        &= \frac{\log^n 2}{n!} \sum_{k=1}^\infty \frac{n! \log^k 2}{(k+n)!}\\
        &\leq \frac{\log^n 2}{n!} e^{\log 2} = \frac{2 \log^n 2}{n!}\\
    \shortintertext{and use Stirling's approximation $n! \leq e \, n^{n+1/2}\, e^{-n}$ to find}
        \varepsilon_n &\leq \frac{2e^{-n} \log^n 2}{e \, n^{n+1/2}}\\
        \log\varepsilon_n &\leq \log 2 - n - n\log\log 2 - 1 - \left(n + \frac{1}{2}\right) \log n.
    \end{align*}
    Therefore
    \begin{equation*}
        \log\frac{1}{\varepsilon_n} = \mathcal{O}(n\log n).
    \end{equation*}
\end{proof}

\begin{corollary} \label{coroll:epsilon_complexity}
    Because the total complexity is of the order $C_n = nL$, the complexity when using full orders depending on the error bound $\varepsilon$, which we will call $C_{\varepsilon, \mathrm{full}}$ scales like
    \begin{equation*}
        C_{\varepsilon, \mathrm{full}} = \mathcal{O}\left(\frac{L\log\frac{1}{\varepsilon}}{\log\log\frac{1}{\varepsilon}}\right)
    \end{equation*}
\end{corollary}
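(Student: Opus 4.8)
The plan is to invert the single-step error estimate behind \cref{lemma:error_bound} and then multiply by the per-order gate cost $L$. By definition $C_{\varepsilon,\mathrm{full}}$ is $C_n = nL$ for the least truncation order $n$ with $\varepsilon_n \leq \varepsilon$; since $\varepsilon_n = \sum_{k>n}(\log 2)^k/k!$ is strictly decreasing in $n$, it is enough to exhibit one such $n$ of the asserted size.

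For that I need the \emph{lower} bound $\log(1/\varepsilon_n) = \Omega(n\log n)$ — note that only this direction matters here, whereas the $\mathcal{O}$ bound stated in \cref{lemma:error_bound} would instead control $C_\varepsilon$ from below. Its proof already contains what is required: $\varepsilon_n \leq 2(\log 2)^n/n!$. Feeding this into the Stirling lower bound $n! \geq (n/e)^n$ gives
\begin{equation*}
    \log\frac{1}{\varepsilon_n} \;\geq\; n\log n - \bigl(1+\log\log 2\bigr)n - \log 2 ,
\end{equation*}
and since $1+\log\log 2 > 0$ the right-hand side is $\Theta(n\log n)$ and eventually positive. Hence there are $c>0$ and $n_0$ with $\log(1/\varepsilon_n) \geq c\,n\log n$ for all $n\geq n_0$, so any $n\geq n_0$ satisfying $c\,n\log n \geq \log(1/\varepsilon)$ already guarantees $\varepsilon_n \leq \varepsilon$.

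It remains to invert $x\mapsto x\log x$. Put $m \coloneqq \log(1/\varepsilon)$ and $n \coloneqq \bigl\lceil 2m/(c\log m)\bigr\rceil$. For $m$ large one has $\sqrt m \leq n \leq m$, so $\tfrac12\log m \leq \log n \leq \log m$, and therefore
\begin{equation*}
    c\,n\log n \;\geq\; c\cdot\frac{2m}{c\log m}\cdot\frac{\log m}{2} \;=\; m \;=\; \log\frac1\varepsilon ,
\end{equation*}
with moreover $n\geq n_0$ once $\varepsilon$ is small enough. Thus this $n$ works and $n = \mathcal{O}(m/\log m)$, i.e. the least admissible order is $\mathcal{O}\!\bigl(\log(1/\varepsilon)/\log\log(1/\varepsilon)\bigr)$; multiplying by $L$ yields $C_{\varepsilon,\mathrm{full}} = \mathcal{O}\!\bigl(L\log(1/\varepsilon)/\log\log(1/\varepsilon)\bigr)$.

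I expect the only real obstacle to be making the inversion of $x\log x$ airtight — chiefly the claim that $\log n \sim \log m$ when $n\sim m/\log m$, i.e. that $\log\log m$ is negligible beside $\log m$ — together with checking that the ceiling and the threshold $n_0$ do not corrupt the asymptotics. Everything else is elementary bookkeeping, and a matching lower bound on $C_{\varepsilon,\mathrm{full}}$ (not needed for the stated $\mathcal{O}$) would follow symmetrically from the upper estimate of \cref{lemma:error_bound}.
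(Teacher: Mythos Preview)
Your proof is correct and follows essentially the same route as the paper: both reach into the proof of \cref{lemma:error_bound} for the Stirling-type bound $\varepsilon_n \leq 2(\log 2)^n/n!$, and both invert the resulting $n\log n$ growth. The paper does the inversion more tersely---it extracts $n < \log(1/\varepsilon_n)$ from the same inequality and substitutes it for the $n$ inside $\log n$---whereas you carry out the inversion by explicit construction; your remark that the needed direction is $\Omega(n\log n)$ rather than the $\mathcal{O}$ literally stated in \cref{lemma:error_bound} is well taken and is exactly why the paper too appeals to the inequality in the \emph{proof} of that lemma.
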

\begin{proof}
    We can use the inequality in \cref{lemma:error_bound}
    \begin{equation*}
        n < \log\frac{1}{\varepsilon_n}
    \end{equation*}
    to replace the $n$ in the logarithm of \cref{lemma:error_bound} and find
    \begin{equation*}
        n = \mathcal{O}\left(\frac{\log\frac{1}{\varepsilon_n}}{\log\log\frac{1}{\varepsilon_n}}\right)
    \end{equation*}
    and therefore
    \begin{equation*}
        C_{\varepsilon, \mathrm{full}} = \mathcal{O}\left(\frac{L\log\frac{1}{\varepsilon}}{\log\log\frac{1}{\varepsilon}}\right).
    \end{equation*}
\end{proof}

\begin{lemma} \label{lemma:error_bound_eps_L}
    The bound for the logarithmic inverse error of the modified version $\log\varepsilon_{\vec{L}}^{-1}$ scales like\\$\mathcal{O}\left(\frac{C_{\vec{L}}}{L}\log \frac{C_{\vec{L}}}{L}\right) \leq \log\varepsilon_{\vec{L}}^{-1}  < \mathcal{O}(C_{\vec{L}} \log C_{\vec{L}})$, depending on the Hamiltonian.
\end{lemma}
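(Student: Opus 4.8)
The plan is to sandwich the single-step error bound $\varepsilon_{\vec{L}}(t_\infty)$ between the error bounds $\varepsilon_n$ of two ordinary full-order truncations, whose orders $n$ are pinned down by $C_{\vec{L}}$ and $L$, and then feed those orders into \cref{lemma:error_bound}.

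First I would record the structure. With $t_\infty=\log(2)/\Lambda$ one has $s_\infty(t_\infty)=e^{\Lambda t_\infty}=2$, so writing $\Lambda_j=\sum_{\ell=0}^{L_j-1}\alpha_\ell$ as in \cref{eq:s_L},
\begin{equation*}
    \varepsilon_{\vec{L}} = 2 - s_{\vec{L}}(t_\infty) = \sum_{k=0}^\infty \frac{t_\infty^k}{k!}\Bigl(\Lambda^k - \prod_{j=1}^k \Lambda_j\Bigr) \geq 0 ,
\end{equation*}
and since $\Lambda_j\leq\Lambda$ every summand is nonnegative and shrinks when any $L_k$ is increased. Let $\kappa$ be the highest order with $L_\kappa\neq0$. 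The insertion rule \labelcref{eq:insert_condition} can only give an order $k$ nonzero weight once all lower orders are already non-empty (otherwise the product in \labelcref{eq:insert_condition} carries a vanishing factor $\sum_{i=1}^{0}\alpha_i$), so $L_k\geq1$ for every $k\leq\kappa$ and therefore $C_{\vec{L}}/L\leq\kappa\leq C_{\vec{L}}$ with $C_{\vec{L}}=\sum_{k=1}^\kappa L_k$.

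For the lower bound on $\log\varepsilon_{\vec{L}}^{-1}$, set $n\coloneqq\lfloor C_{\vec{L}}/L\rfloor$: the full expansion to order $n$ costs $nL\leq C_{\vec{L}}$, and keeping orders $1,\dots,n$ in full kills the first $n+1$ summands above and bounds the remainder by $\sum_{k>n}(\log 2)^k/k!=\varepsilon_n$, so $\varepsilon_{\vec{L}}\leq\varepsilon_n$. With the intermediate estimate $\varepsilon_n\leq 2(\log 2)^n/n!$ from the proof of \cref{lemma:error_bound} and Stirling, $\log\varepsilon_n^{-1}=\Omega(n\log n)$, hence $\log\varepsilon_{\vec{L}}^{-1}\geq\log\varepsilon_n^{-1}=\mathcal{O}\!\bigl((C_{\vec{L}}/L)\log(C_{\vec{L}}/L)\bigr)$. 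For the upper bound, replacing each $\Lambda_j$ by $\Lambda$ in $s_{\vec{L}}(t_\infty)=\sum_{k=0}^{\kappa}\tfrac{t_\infty^k}{k!}\prod_{j\leq k}\Lambda_j$ gives $s_{\vec{L}}(t_\infty)\leq s_\kappa(t_\infty)$, so $\varepsilon_{\vec{L}}\geq\varepsilon_\kappa\geq(\log 2)^{\kappa+1}/(\kappa+1)!$ by retaining a single tail term; Stirling then yields $\log\varepsilon_\kappa^{-1}=\mathcal{O}(\kappa\log\kappa)$, and $\kappa\leq C_{\vec{L}}$ gives $\log\varepsilon_{\vec{L}}^{-1}\leq\mathcal{O}(\kappa\log\kappa)=\mathcal{O}(C_{\vec{L}}\log C_{\vec{L}})$. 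The two bounds together are the claim.

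The hard part will be the lower-bound step, i.e. justifying $\varepsilon_{\vec{L}}\leq\varepsilon_{\lfloor C_{\vec{L}}/L\rfloor}$. For the \emph{error-minimising} expansion of a given cost this is immediate, since the full expansion to order $\lfloor C_{\vec{L}}/L\rfloor$ is a competitor of no larger cost; for the purely greedy construction one additionally needs that the greedy run dominates, order by order, the run that just fills orders $1,2,\dots$ in turn — intuitively because \labelcref{eq:insert_condition} only opens a new order once the lower ones have been filled far enough for that to pay off — and promoting this to a monotonicity/exchange argument on $s_{\vec{L}}(t_\infty)$ is the one genuinely nontrivial point. A minor gap is that \cref{lemma:error_bound} as stated only records the $\mathcal{O}$ side of $\log\varepsilon_n^{-1}=\Theta(n\log n)$; the matching lower estimate used above follows from $\varepsilon_n\geq(\log 2)^{n+1}/(n+1)!$ and Stirling. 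Everything else is routine bookkeeping.
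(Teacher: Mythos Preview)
Your approach is genuinely different from the paper's. The paper does not attempt universal bounds valid for every Hamiltonian; instead it argues by exhibiting the two extremal Hamiltonians that realise the endpoints of the range. For the left inequality it takes the equal-weight case $\alpha_0=\alpha_1=\cdots=\alpha_{L-1}$, observes that the greedy construction then coincides with the full-order expansion, and invokes \cref{lemma:error_bound} with $n=C_{\vec{L}}/L$. For the right inequality it takes the limit $\alpha_\ell/\alpha_0\to 0$ for $\ell\geq 1$, where adding $h_0$ in a new order is tantamount to adding that whole order, so effectively $L\to 1$ and \cref{lemma:error_bound} with $n=C_{\vec{L}}$ applies; the strict inequality reflects that this limit is never attained. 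The phrase ``depending on the Hamiltonian'' in the statement is meant literally: the lemma describes the spread of possible scalings, not a sandwich for fixed $H$.

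Your route---sandwiching $\varepsilon_{\vec{L}}$ between $\varepsilon_\kappa$ and $\varepsilon_{\lfloor C_{\vec{L}}/L\rfloor}$ for \emph{any} Hamiltonian---is more ambitious and, where it goes through, more informative. Your upper bound $\varepsilon_{\vec{L}}\geq\varepsilon_\kappa$ via $\Lambda_j\leq\Lambda$ is clean and correct. The lower-bound step $\varepsilon_{\vec{L}}\leq\varepsilon_{\lfloor C_{\vec{L}}/L\rfloor}$ is, as you flag, the genuine issue: it is automatic for the cost-optimal $\vec{L}$ but not obviously for the greedy one, and the paper simply sidesteps this by restricting attention to a Hamiltonian where greedy and full-order coincide. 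So your proof would, if the greedy-domination step were completed, prove strictly more than the paper claims; as it stands, the paper's extremal-example argument is the shorter path to the stated lemma.
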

\begin{proof}
    The left inequality follows immediately from the worst case that $\alpha_0 = \alpha_1 = \ldots = \alpha_L$. In this case the modification is equivalent to the original method and \cref{lemma:error_bound} with $n = C_{\vec{L}}/L$ holds.

    In the other extreme case of $\frac{\alpha_\ell}{\alpha_0} \to 0 \:\: \forall\: \ell \in \{1\ldots L\}$, one term dominates the whole Hamiltonian, and adding $h_0$ in some order of the expansion equates to adding that whole order, effectively reducing $L$ to 1. Therefore \cref{lemma:error_bound} with $L=1$ defines an upper bound for the error scaling.
\end{proof}

\begin{corollary} \label{coroll:modified_complexity}
    The complexity of our modified version depending on the simulation error bound $\varepsilon$, which we call $C_\varepsilon$, is bounded by
    \begin{equation*}
        \mathcal{O}\left(\frac{\log\frac{1}{\varepsilon}}{\log\log\frac{1}{\varepsilon}}\right) < C_\varepsilon \leq \mathcal{O}\left(\frac{L\log\frac{1}{\varepsilon}}{\log\log\frac{1}{\varepsilon}}\right),
    \end{equation*}
\end{corollary}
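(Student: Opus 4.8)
\emph{Proof idea.} The statement is an inversion of the two‑sided estimate established in \cref{lemma:error_bound_eps_L}, which I will use in the form: there are constants $c_1,c_2>0$ such that, in the regime $C_{\vec L}/L\to\infty$,
\[
c_1\,\frac{C_{\vec L}}{L}\log\frac{C_{\vec L}}{L}\;\le\;\log\frac{1}{\varepsilon_{\vec L}}\;\le\;c_2\,C_{\vec L}\log C_{\vec L}.
\]
The plan is to treat the two inequalities separately. The \emph{left} inequality (the slowest decay of the error bound that is guaranteed, attained at a uniform Hamiltonian) will yield the \emph{upper} bound on $C_\varepsilon$, and the \emph{right} inequality (the fastest decay that is possible, the $L\to1$ extreme) will yield the \emph{lower} bound. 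The only genuinely quantitative ingredient is the elementary inversion: for $x\to\infty$, the relation $x\log x\asymp y$ is equivalent to $x=\Theta(y/\log y)$, and this is precisely what turns a $\log\tfrac1\varepsilon$ into a $\log\tfrac1\varepsilon/\log\log\tfrac1\varepsilon$. This mirrors the passage from \cref{lemma:error_bound} to \cref{coroll:epsilon_complexity}, applied now to each end of the Hamiltonian‑dependent interval rather than to the single bound for full orders.

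\emph{Upper bound on $C_\varepsilon$.} Fix a target per‑step error bound $\varepsilon$ and set $y:=c_1^{-1}\log\tfrac1\varepsilon$. I would choose $\vec L$ so that $x:=C_{\vec L}/L$ satisfies $x\log x\ge y$; the concrete choice $x=2y/\log y$ works once $y$ is large, since then $\log x=\log 2+\log y-\log\log y\ge\tfrac12\log y$, whence $x\log x\ge y$. For that $\vec L$ the left inequality above gives $\log\varepsilon_{\vec L}^{-1}\ge c_1x\log x\ge\log\tfrac1\varepsilon$, i.e.\ $\varepsilon_{\vec L}\le\varepsilon$, at cost $C_{\vec L}=Lx=\mathcal O\!\big(L\log\tfrac1\varepsilon/\log\log\tfrac1\varepsilon\big)$. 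Hence $C_\varepsilon\le\mathcal O\!\big(L\log\tfrac1\varepsilon/\log\log\tfrac1\varepsilon\big)$. (Alternatively one can observe that a full‑order truncation to order $n$ is a special case of the modified expansion, so $C_\varepsilon\le C_{\varepsilon,\mathrm{full}}$, and invoke \cref{coroll:epsilon_complexity} directly.)

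\emph{Lower bound on $C_\varepsilon$.} Conversely, suppose some $\vec L$ attains $\varepsilon_{\vec L}\le\varepsilon$. Then the right inequality gives $\log\tfrac1\varepsilon\le\log\varepsilon_{\vec L}^{-1}\le c_2C_{\vec L}\log C_{\vec L}$, so $C_{\vec L}\log C_{\vec L}\ge y':=c_2^{-1}\log\tfrac1\varepsilon$. If we had $C_{\vec L}<y'/(2\log y')$, then $C_{\vec L}\log C_{\vec L}<\tfrac{y'}{2\log y'}\log y'=y'/2<y'$, a contradiction; hence $C_{\vec L}\ge y'/(2\log y')=\Omega\!\big(\log\tfrac1\varepsilon/\log\log\tfrac1\varepsilon\big)$. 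Therefore any cost attaining error $\varepsilon$ obeys $C_\varepsilon>\Omega\!\big(\log\tfrac1\varepsilon/\log\log\tfrac1\varepsilon\big)$, the strict sign inherited from \cref{lemma:error_bound_eps_L} (the $L\to1$ extreme being a limit that is not actually reached). Combining the two bounds yields the stated range for $C_\varepsilon$.

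\emph{Main obstacle.} There is little deep difficulty once \cref{lemma:error_bound_eps_L} is available; the care needed is almost entirely bookkeeping: keeping straight that a \emph{slow} guaranteed decay of $\varepsilon_{\vec L}$ produces the \emph{upper} bound on cost while the \emph{fast} possible decay produces the \emph{lower} bound, and checking the constants in the $x=\Theta(y/\log y)$ inversion honestly in the $\varepsilon\to0$ limit. A minor point worth stating explicitly in the write‑up is that the left inequality of \cref{lemma:error_bound_eps_L} holds for \emph{every} Hamiltonian in the considered class (not only the uniform one), which is what makes the upper bound on $C_\varepsilon$ uniform.
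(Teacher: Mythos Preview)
Your proposal is correct and follows essentially the same route as the paper: the paper's own proof is the one-liner ``the same reasoning as in \cref{coroll:epsilon_complexity} applies to the bounds established in \cref{lemma:error_bound_eps_L}'', and you have simply written out that inversion $x\log x\asymp y\Leftrightarrow x=\Theta(y/\log y)$ explicitly at both ends of the range, together with the useful remark that the upper bound can alternatively be read off from $C_\varepsilon\le C_{\varepsilon,\mathrm{full}}$.
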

\begin{proof}
    The same reasoning as in \cref{coroll:epsilon_complexity} applies to the bounds established in \cref{lemma:error_bound_eps_L}.
\end{proof}

\begin{corollary} \label{coroll:total_error_complexity}
    The complexity of simulating for a time $\tau$ with a total error bound $\epsilon$ is bounded by
    \begin{equation*}
        \mathcal{O}\left(\frac{\Lambda \tau \log\frac{\Lambda\tau}{\epsilon}}{\log\log\frac{\Lambda\tau}{\epsilon}}\right) < C_\epsilon \leq \mathcal{O}\left(\frac{L\Lambda\tau\log\frac{\Lambda\tau}{\epsilon}}{\log\log\frac{\Lambda\tau}{\epsilon}}\right).
    \end{equation*}
\end{corollary}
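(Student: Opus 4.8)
The plan is to combine the per-step complexity bound of \cref{coroll:modified_complexity} with the number of time steps forced by the fixed step size $t_\infty = \log(2)/\Lambda$ and the additivity of the error established in \cref{lemma:error_repetitions}. First I would count steps: since each application of $\tilde{\mathcal{A}}_{\vec{L}}$ advances the simulation by $t_\infty$, reaching total time $\tau$ requires $r = \tau/t_\infty = \Lambda\tau/\log 2$ repetitions (rounding up to an integer changes only constants). By \cref{lemma:error_repetitions} the error grows at most linearly in $r$, so by the definition $\epsilon \coloneqq r\varepsilon_{\vec{L}}$ a total error budget $\epsilon$ is met precisely by imposing the per-step bound $\varepsilon_{\vec{L}} = \epsilon/r = \epsilon\log 2/(\Lambda\tau)$.

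Next I would feed this per-step target into \cref{coroll:modified_complexity} (which itself rests on \cref{lemma:error_bound_eps_L}). Writing $x \coloneqq 1/\varepsilon_{\vec{L}} = \Lambda\tau/(\epsilon\log 2)$, that corollary gives $\mathcal{O}\!\big(\log x/\log\log x\big) < C_{\vec{L}} \leq \mathcal{O}\!\big(L\log x/\log\log x\big)$ for the cost of a single step. The total cost is $C_\epsilon = r\,C_{\vec{L}}$, and substituting $r = \Lambda\tau/\log 2$ and absorbing the constant $\log 2$ yields a lower bound $\mathcal{O}\!\big(\Lambda\tau\,\log x/\log\log x\big)$ and an upper bound $\mathcal{O}\!\big(L\Lambda\tau\,\log x/\log\log x\big)$ on $C_\epsilon$.

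The final step is to replace $x$ by $\Lambda\tau/\epsilon$ inside the logarithms. Since $x = \Lambda\tau/(\epsilon\log 2)$ differs from $\Lambda\tau/\epsilon$ only by a multiplicative constant, we have $\log x = \log(\Lambda\tau/\epsilon) + \mathcal{O}(1) = \Theta\!\big(\log(\Lambda\tau/\epsilon)\big)$ and hence $\log\log x = \Theta\!\big(\log\log(\Lambda\tau/\epsilon)\big)$, \emph{provided} we are in the nontrivial regime where $\Lambda\tau/\epsilon$ is bounded away from $1$ (otherwise the statement is vacuous). Making these substitutions gives exactly the claimed two-sided bound. I expect the only delicate point to be this last substitution: one must verify that the additive constant carried into the inner $\log\log$ does not break the $\Theta$-equivalence, which is exactly where the implicit assumption $\Lambda\tau \gg \epsilon$ (the simulation is not trivially easy) enters; everything else is bookkeeping of constants.
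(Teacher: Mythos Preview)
Your proposal is correct and follows essentially the same approach as the paper: determine the number of steps $r = \mathcal{O}(\Lambda\tau)$ from the fixed step size $t_\infty$, set the per-step error to $\varepsilon = \epsilon/r$, substitute into \cref{coroll:modified_complexity}, and multiply by $r$. You are simply more explicit than the paper about absorbing the $\log 2$ constants inside the logarithms, but the argument is identical.
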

\begin{proof}
    Simulating for a time $\tau = rt_\infty = r\log(2)/\Lambda$ requires $r$ steps. The error of a single step $\varepsilon$ must therefore be $\varepsilon = \epsilon/r$. Substituting this in \cref{coroll:modified_complexity} and multiplying by the number of steps $r = \mathcal{O}(\tau\Lambda)$  proves the claim.
\end{proof}

\end{document}